\title{Stationary Spacetimes and
	Self-Adjointness in Klein-Gordon Theory}
\author[1]{Felix Finster\footnote{finster@ur.de}}
\author[2]{Albert Much\footnote{amuch@matmor.unam.mx}}
\author[3]{Robert Oeckl\footnote{robert@matmor.unam.mx}}
\affil[1]{Fakult\"at f\"ur Mathematik\\
	Universit\"at Regensburg\\
	D-93040 Regensburg}
\affil[2]{Institut f\"ur Theoretische Physik\\ Universit\"at Leipzig\\ D-04103 Leipzig} 
\affil[2,3]{Centro de Ciencias Matem\'aticas\\
	Universidad Nacional Aut\'onoma de M\'exico\\
	C.P. 58190, Morelia, Michoac\'an, Mexico}
\date{October 2019}	    
\newtheorem{theorem}{\textsc{Theorem}}[section]
\newtheorem{lemma}[theorem]{\textsc{Lemma}}
\newtheorem{proposition}[theorem]{\textsc{Proposition}}
\newtheorem{corollary}[theorem]{\textsc{}Corollary}
\newtheorem{definition}[theorem]{\textsc{Definition}}
\newtheorem{convention}[theorem]{Conventions}
\newtheorem{assumption}[theorem]{Assumption}
\newcommand{\R}{\mathbb{R}}
\newcommand{\Z}{\mathbb{Z}}
\numberwithin{equation}{section} 
\begin{document}
	\maketitle
	\abstract{
		We consider the problem of essential self-adjointness of the spatial part of the Klein-Gordon operator in stationary spacetimes. This operator is shown to be a Laplace-Beltrami type operator plus a potential.
		In globally hyperbolic spacetimes, essential self-adjointness is proven assuming smoothness of the metric components and semi-boundedness of the potential. This extends a recent result for static spacetimes to the stationary case. Furthermore, we generalize the results to certain non-globally hyperbolic spacetimes.} 
	
	\tableofcontents
	
	\section{Introduction}
	In a recent paper~\cite{MO} essential self-adjointness of the spatial part of the Klein-Gordon operator (with potential)
	was proved under fairly mild assumptions. In particular, this proof applies in all smooth globally hyperbolic static cases. Although any globally hyperbolic manifold $(M,g)$  admits a global time function
	for which the metric is of the form (see~\cite[Theorem 1.1]{B05}),
	\begin{equation}
	g=-N^2 dt^2+g_{ij}dx^idx^j,
	\label{m}\end{equation}
	in many applications one must deal with more general coordinates with a non-vanishing shift vector. This means
	that there is a term mixing the temporal and spatial components, represented by the metric
	\begin{align*}
	g=\tilde{N}^2\,dt^2+2N_i\,dt\,dx^i+{g}_{ij} \,dx^i\,dx^j.
	\end{align*}  
	In general all components can be time dependent. However, in this paper we restrict attention to {\em{time independent}} metrics. One important example of such stationary spacetimes is the Kerr metric
	(see for example~\cite{oneill, chandra}). 
	Hence, one goal of this paper is to generalize the previously mentioned
	result (a special case of~\cite{MO}; see also \cite[Theorem 7.1]{K78})
	to the case of globally hyperbolic \textit{stationary} spacetimes.
	Apart from the importance of this result for the Kerr metric and stability considerations,
	it is also useful for computing various propagators in quantum field theory; see for example~\cite{DS17}. In particular, the essential self-adjointness of the spatial part of the Klein-Gordon operator for globally hyperbolic spacetimes is the first assumption that the authors in~\cite[Assumption 1.a.]{DS17} make. Moreover, a proof of essential self-adjointness of the respective operator can be used to rigorously construct complex structures on the classical solution spaces
	in stationary spacetimes in the context of  quantum field theory, as done in \cite{K78} and  \cite{MO2}. 
	
	Our main result (see Theorem~\ref{mt}) generalizes the globally hyperbolic static case to the stationary one,
	without any further restrictions, except for smoothness of the manifold.
	The techniques also apply to spacetimes with ergo regions like the Kerr geometry
	(see Theorem~\ref{them:mtad}). Finally, we adapt the methods to obtain similar results
	for certain non-globally hyperbolic stationary spacetimes (see Theorem~\ref{thm54}).
	
	Our results lead to the following directions of future research. First, it would be useful (at least in the context of \cite{DS17}) to prove that the current result can be generalized to time-dependent metrics. Preliminary results indicate that the generalization in this direction is straightforward. Second, the dynamics can be formulated on stably causal,
	non-globally hyperbolic manifolds that are static or stationary, as long as the spatial part of the operator of the Klein-Gordon equation is self-adjoint, see \cite{NGH2}, \cite{NGH3}, \cite{NGH4}. Hence proving \textit{essential} self-adjointness would choose a unique self-adjoint extension of the respective operator. It is also of interest to explore
	if the resulting complex structures on the solution spaces
	in the non-globally hyperbolic stationary case match the framework of \cite{NGH1}. This is current work in progress. 
	\begin{convention}
		Throughout this work we use Greek letters $\mu, \,\nu=0,\ldots,3$ for spacetime indices and we use Latin letters  $i,\,j,\,k,\ldots$ for  spatial components which run from $1,\ldots,3$.	Moreover, we use the  signature $(-1,1,1,1)$, bold letters for the spatial metrics and $|g|$ denotes the determinant w.r.t.\ the metric $g$. 
	\end{convention}

	\section{The Geometry}
	
	In   what follows, we restrict our attention to stationary    spacetimes. A spacetime $(M,g)$ is called stationary if there is a Killing vector field which is timelike near infinity (see~\cite[Page 128]{FU}). This means in appropriate charts that the metric components $g_{\mu\nu}$ do not depend on the time coordinate~$t$.
	Furthermore, we consider   a four-dimensional  smooth spacetime manifold  $M$ that is the product manifold $M=\mathbb{R}\times\Sigma$, where   $\Sigma$ is a  smooth spacelike hypersurface,  (strictly speaking $\Sigma_t=\{t\}\times\Sigma$)  
	with induced Riemannian metric ${g}_{ij}(\vec{x})$, and   metric $g$   of the form 
	\begin{align}\label{eq:metric1b}  
	g=(-N^{2}(\vec{x}) +N_i N^i )\,dt^2+2N_i(\vec{x})\,dt\,dx^i+{g}_{ij}(\vec{x}) \,dx^i\,dx^j ,
	\end{align}  
	with $N\in C^{\infty}(\Sigma,\mathbb{R}^+)$ being a non-negative function  and a vector field $N^i\partial_i\in\Gamma(M,T\Sigma)$ tangential to $\Sigma_t$.    The components of the inverse metric ${g}^{\mu\nu}$ are given by 
	\begin{align*}  
	g^{00}=-N^{-2},\qquad  g^{0i}= N^{-2}N^i,\qquad {h}^{ij}= {g}^{ij} -N^{-2}N^iN^j.
	\end{align*}  
	The inverse of the metric ${h}^{ij}$, which is essential in the following sections, is given by  
	\begin{align*}
	{h}_{ij}=  {g}_{ij} +\frac{N_i N_j}{N^{2}-N_iN^i}.
	\end{align*}   \begin{assumption}\label{assump:met}
		In what follows, we assume that  the   shift vector field $\vec{N}$   is not completely independent but is restricted by the  condition 
		\begin{equation}\label{eq:c}
		-N^{2}+N_iN^i<0,
		\end{equation}
		unless otherwise stated.  This condition is equivalent to the assumption that the Killing vector field~$X(\equiv\partial_t)=N\,n(\Sigma)+N^i\partial_{i}$ is everywhere timelike (where~$n(\Sigma)$ is a unit future-pointing normal vector field on~$\Sigma$; see also~\cite[Equation 1.3]{K78}).  
	\end{assumption} 
	
	\section{The Main Result}
	
	Next, we investigate the spatial part of the Klein-Gordon equation
	\begin{align*}  
	\left(({\sqrt{|g|}})^{-1}\partial_{\mu}(\sqrt{|g|}g^{{\mu}{\nu}}\partial_{\nu})-m^2(\vec{x})\right)\Phi =0,
	\end{align*}
	in a stationary spacetime $(M,g)$ with metric tensor given in Equation (\ref{eq:metric1b}),
	where~$m(\vec{x})$ is a potential which may include the rest mass, scalar curvature and possibly
	a time-independent external potential.
	\begin{proposition}\label{p1}
		The spatial part of the Klein-Gordon equation resulting from  the metric $g$, given in Equation (\ref{eq:metric1b}),  has the form of a scaled weighted Laplace operator (see Definition and Equation \ref{lapop}) plus a potential term, i.e.,
		\begin{align}\nonumber
		w^2& =		- N^2\, \frac{1}{\rho\sqrt{ |\textbf{h}|}}  \partial_i\left(   \rho\,\sqrt{ |\textbf{h}|}   h^{ij}     \partial_j \right)+V(x)
		\\&=-  N^2\, \Delta_{\mu,\textbf{h}}+V(x),\label{op}
		\end{align}
		where we use the following definition of variables,
		\begin{equation}\label{ro}
		\rho:= \sqrt{|g|}  (\sqrt{ |\textbf{h}|})^{-1},  \qquad\qquad V(x):=  N^2\, m^2(x).
		\end{equation} 
	\end{proposition}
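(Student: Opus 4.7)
The plan is a direct ADM-type expansion of the d'Alembertian associated to~\eqref{eq:metric1b}, followed by an algebraic rearrangement that matches the definition of~$\rho$ in~\eqref{ro}. Stationarity gives $\partial_0\sqrt{|g|}=0$ and $\partial_0 g^{\mu\nu}=0$, so after splitting
\[
(\sqrt{|g|})^{-1}\partial_\mu\bigl(\sqrt{|g|}\,g^{\mu\nu}\partial_\nu\Phi\bigr)
\]
into the four index groups $(0,0)$, $(0,i)$, $(i,0)$ and $(i,j)$, and inserting the ADM inverse components listed in the excerpt, the Klein--Gordon operator reduces to
\[
g^{00}\partial_t^2\Phi+2g^{0i}\partial_t\partial_i\Phi+\frac{1}{\sqrt{|g|}}\partial_i\bigl(\sqrt{|g|}\,h^{ij}\partial_j\Phi\bigr)-m^2\Phi.
\]

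To isolate a genuine ``spatial part'' I would normalize the $\partial_t^2$ coefficient to $1$ by multiplying by $-1/g^{00}=N^2$. With $g^{00}=-N^{-2}$ and $g^{0i}=N^{-2}N^i$ this yields the hyperbolic form
\[
\partial_t^2\Phi-2N^i\partial_t\partial_i\Phi+w^2\Phi=0,\qquad w^2:=-N^2\,\frac{1}{\sqrt{|g|}}\partial_i\bigl(\sqrt{|g|}\,h^{ij}\partial_j\bigr)+N^2 m^2,
\]
which is already the first line of~\eqref{op} once one reads off $V(x)=N^2m^2$. The identification of $w^2$ as ``the spatial part'' is natural because it is precisely the component of the operator that contains no $\partial_t$.

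The passage to the second line of~\eqref{op} is then purely cosmetic: by the definition~\eqref{ro} one has $\sqrt{|g|}=\rho\sqrt{|\mathbf{h}|}$, so substituting this into both the outer and the inner factor of $w^2$ gives
\[
w^2=-N^2\,\frac{1}{\rho\sqrt{|\mathbf{h}|}}\partial_i\bigl(\rho\sqrt{|\mathbf{h}|}\,h^{ij}\partial_j\bigr)+V(x)=-N^2\Delta_{\mu,\mathbf{h}}+V(x),
\]
where the last equality is the definition~\eqref{lapop} of the weighted Laplace operator.

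Conceptually there is no obstacle; the proof is a bookkeeping exercise. The only points requiring care are keeping track of the signs and $N^2$-factors introduced when multiplying through by $-N^2$, and remembering that the symbol $h^{ij}$ already denotes the spatial block of the \emph{inverse four-metric}, rather than the inverse of the induced spatial metric $g_{ij}$; it is this fact that ensures the spatial part already appears in pure divergence form, making the passage from $\sqrt{|g|}$ to $\rho\sqrt{|\mathbf{h}|}$ go through trivially.
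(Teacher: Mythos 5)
Your proposal takes essentially the same route as the paper's proof in Appendix A.1: expand the d'Alembertian over the index blocks using the stationary inverse-metric components, divide by $g^{00}=-N^{-2}$ to isolate the $\partial_t$-free part, and then rewrite $\sqrt{|g|}=\rho\sqrt{|\mathbf{h}|}$ to match Definition \ref{def21}; the resulting identification of $w^2$ and $V=N^2m^2$ is correct. One bookkeeping slip: in the $(i,0)$ block the divergence also acts on $\sqrt{|g|}\,g^{i0}$, so your ``reduced'' operator and the hyperbolic form omit the first-order term $(\sqrt{|g|})^{-1}\partial_i\bigl(\sqrt{|g|}\,N^{-2}N^i\bigr)\partial_t\Phi$, which the paper absorbs into the coefficient $f$ of $\partial_0$ (also, multiplying by $N^2$ rather than $-N^2$ merely flips the overall sign of the equation). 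Since that missing term carries a $\partial_t$, it contributes nothing to the spatial part, so the conclusion of the proposition is unaffected.
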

	\begin{proof}
		See Appendix \ref{pp1}.
	\end{proof}
The functional properties of the function  $\rho$ are given in the following result.
	\begin{lemma}\label{lem:rho}
		The function $\rho=\sqrt{|g|}  (\sqrt{ |\textbf{h}|})^{-1}$ is explicitly given by 
		\begin{align*}
		\rho=\sqrt{ |g_{00}^{-1}|} .
		\end{align*}
		Furthermore, it is a strictly positive and smooth function.
	\end{lemma}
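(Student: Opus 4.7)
The plan is to evaluate the two determinants appearing in the definition $\rho = \sqrt{|g|}\bigl(\sqrt{|\mathbf{h}|}\bigr)^{-1}$ separately and simplify the ratio. I expect this to be essentially an algebraic exercise, with no serious obstacle beyond keeping track of index positions, in particular distinguishing the inverse spatial metric $g^{ij}$ from the spatial block $h^{ij}$ of the inverse spacetime metric, which the notation conflates.

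First, I would compute $|g|$ using the block structure
\[g_{\mu\nu} = \begin{pmatrix} g_{00} & N_j \\ N_i & g_{ij}\end{pmatrix}\]
and the Schur-complement formula $\det(g_{\mu\nu}) = \det(g_{ij})\,\bigl(g_{00} - N_i g^{ij} N_j\bigr)$. The identities $g^{ij} N_j = N^i$ and $g_{00} = -N^2 + N_i N^i$ from~\eqref{eq:metric1b} collapse the Schur complement to $-N^2$, yielding the familiar ADM factorization $|g| = N^2 \det(g_{ij})$.

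Next, for $|\mathbf{h}|$ I would exploit the rank-one structure $h^{ij} = g^{ij} - N^{-2} N^i N^j$ and apply the matrix-determinant lemma to the rank-one perturbation of $(g^{ij})$, obtaining
\[\det(h^{ij}) = \det(g^{ij})\bigl(1 - N^{-2} N_k N^k\bigr) = \frac{N^2 - N_k N^k}{N^2\,\det(g_{ij})}.\]
Since $h_{ij}$ is the matrix inverse of $h^{ij}$ (as is readily verified from the explicit formula recorded just before Assumption~\ref{assump:met}), we get $|\mathbf{h}| = 1/\det(h^{ij}) = N^2 \det(g_{ij})/(N^2 - N_k N^k)$. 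Taking the quotient, the factors $N^2$ and $\det(g_{ij})$ cancel, leaving $\rho^2 = N^2 - N_k N^k = -g_{00}$, which equals $|g_{00}|$ under Assumption~\ref{assump:met} and agrees with the claimed formula in the lemma's notation.

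Strict positivity of $\rho$ is immediate: Assumption~\ref{assump:met} states precisely that $N^2 - N_k N^k > 0$. Smoothness follows at once, since $N\in C^\infty(\Sigma,\R^+)$, $N^i$, and $g_{ij}$ are smooth, so $\rho^2$ is a smooth scalar field bounded away from zero and hence $\rho = \sqrt{\rho^2}$ is smooth.
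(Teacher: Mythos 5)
Your argument is correct, but it takes a genuinely different route from the paper's. The paper's proof in the appendix expands both determinants explicitly in components: it writes out all terms of $|g|$ and of $\det(h_{ij})$ for the $3\times 3$ spatial block and observes by inspection that the two bracketed polynomials coincide, concluding $|\mathbf{h}|=|g_{00}^{-1}|\,|g|$. You obtain the same identity structurally, via the Schur-complement factorization $\det g=\det(\mathbf{g})\,\bigl(g_{00}-N_i g^{ij}N_j\bigr)=-N^2\det(\mathbf{g})$ together with the matrix-determinant lemma applied to the rank-one update $h^{ij}=g^{ij}-N^{-2}N^iN^j$, plus the (correct and easily checked) fact that $h_{ij}$ is the matrix inverse of $h^{ij}$. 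Your route is shorter, dimension-independent, and makes the cancellation transparent, whereas the paper's is a direct component verification tied to three spatial dimensions. One point deserves care: both computations give $\rho^2=|g|/|\mathbf{h}|=-g_{00}=N^2-N_iN^i$, i.e.\ $\rho=\sqrt{|g_{00}|}$, which is not literally $\sqrt{|g_{00}^{-1}|}=|g_{00}|^{-1/2}$ as displayed in the lemma; your closing remark that this ``agrees with the claimed formula in the lemma's notation'' glosses over this, but the same tension is present in the paper itself, whose derived identity $|\mathbf{h}|=|g_{00}^{-1}|\,|g|$ likewise yields $\rho=\sqrt{|g_{00}|}$ (and reduces to $\rho=N$ in the static case, as it must). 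So this is a notational slip in the statement rather than a flaw in your algebra. Your positivity and smoothness arguments coincide with the paper's, both resting on Assumption~\ref{assump:met} and smoothness of $N$, $N^i$, $g_{ij}$.
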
\begin{proof} See Appendix \ref{plem:rho}.
	\end{proof}
	In analogy to the proof in \cite{MO}, for the proof of essential self-adjointness of the spatial part of the Klein-Gordon equation we use the theory of weighted Hilbert spaces and the related fundamental theorem given in \cite{S01} (see Appendix \ref{thm:sch}).
	We proceed by supplying a minimal introduction. For further details, see the excellent reference \cite{AG1}. We begin with the definition of weighted manifolds and the corresponding weighted Laplace-Beltrami operator \cite[Chapter 3.6, Definition 3.17]{AG1}. 
	\begin{definition}\label{def21}
		A triple $(\Sigma,	\mathbf{h},\mu)$ is called a \textbf{weighted manifold}, if $(\Sigma,	\mathbf{h})$ is a Riemannian manifold and $\mu$ is a measure on $\Sigma$ with a smooth and everywhere positive density function $\rho$. The \textbf{weighted Hilbert space}, denoted as $L^2(\Sigma, \mu)$, is given  as the space of all  square-integrable functions  on the manifold $\Sigma$ with respect to the measure $\mu$.  	The corresponding  \emph{weighted Laplace-Beltrami operator} (also called the Dirichlet-Laplace operator),  denoted by $\Delta_{\mu,	\mathbf{h}}$ is given by 
		\begin{equation}\label{lapop}
		\Delta_{\mu,\mathbf{h}}=\frac{1}{\rho \sqrt{|\mathbf{h}|}}\partial_i(\rho\sqrt{|\mathbf{h}|}h^{ij}\partial_j).
		\end{equation} 
	\end{definition}
	Another fact that is useful for our proof about weighted manifolds and weighted Laplace-Beltrami operators is given in the following  proposition  \cite[Chapter 3, Exercise 3.11]{AG1}. 
	\begin{proposition}\label{p5}
		Let $\alpha$ be  a smooth and everywhere positive function on a weighted manifold $(\Sigma,\mathbf{h},\mu)$ and define a new metric $\tilde{\mathbf{h}}$ and measure $\tilde{\mu}$ by $$\mathbf{\tilde{h}}=\alpha\, {\mathbf{h}},\qquad \mathrm{and} \qquad d\tilde{\mu}=\alpha\, d{\mu}.$$ Then, the weighted Laplace-Beltrami operator $\tilde{\Delta}_{ \tilde{\mu}}$ of the weighted manifold $(\Sigma, \mathbf{\tilde{h}}, \tilde{\mu})$ is given by 	$$\tilde{\Delta}_{ \tilde{\mu},\tilde{\mathbf{h}}}=\frac{1}{\alpha}{\Delta}_{ {\mu},{\mathbf{h}}}.$$ 
	\end{proposition}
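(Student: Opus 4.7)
The plan is to verify the identity by a direct computation from the definition~\eqref{lapop}, tracking how each ingredient $\sqrt{|\mathbf{h}|}$, $h^{ij}$, and $\rho$ transforms under the conformal rescaling $\tilde{\mathbf{h}} = \alpha\mathbf{h}$ and the change of measure $d\tilde\mu = \alpha\,d\mu$.

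First I would record the elementary scaling laws. On the $n$-dimensional manifold $\Sigma$, the conformal rescaling by $\alpha$ gives
\[
\sqrt{|\tilde{\mathbf{h}}|} = \alpha^{n/2}\sqrt{|\mathbf{h}|}, \qquad \tilde{h}^{ij} = \alpha^{-1} h^{ij}.
\]
Next I would identify the density $\tilde\rho$ of the new measure. Following the convention of Definition~\ref{def21}, the density is taken with respect to the Riemannian volume, so that $d\mu = \rho\sqrt{|\mathbf{h}|}\,d^n x$ and analogously $d\tilde\mu = \tilde\rho\sqrt{|\tilde{\mathbf{h}}|}\,d^n x$. Comparing with $d\tilde\mu = \alpha\,d\mu$ immediately yields the product identity
\[
\tilde\rho\,\sqrt{|\tilde{\mathbf{h}}|} \;=\; \alpha\,\rho\,\sqrt{|\mathbf{h}|},
\]
which is all that is needed; the individual value $\tilde\rho = \alpha^{1-n/2}\rho$ need not be isolated.

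Substituting into the defining formula for $\tilde\Delta_{\tilde\mu,\tilde{\mathbf{h}}}$, the factor inside the divergence simplifies as
\[
\tilde\rho\,\sqrt{|\tilde{\mathbf{h}}|}\,\tilde{h}^{ij} \;=\; \bigl(\alpha\,\rho\,\sqrt{|\mathbf{h}|}\bigr)\bigl(\alpha^{-1} h^{ij}\bigr) \;=\; \rho\,\sqrt{|\mathbf{h}|}\,h^{ij},
\]
which coincides with the corresponding quantity in $\Delta_{\mu,\mathbf{h}}$. The outer prefactor, meanwhile, picks up exactly one extra power of $\alpha^{-1}$, namely $(\tilde\rho\sqrt{|\tilde{\mathbf{h}}|})^{-1} = \alpha^{-1}(\rho\sqrt{|\mathbf{h}|})^{-1}$. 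Combining the two gives $\tilde\Delta_{\tilde\mu,\tilde{\mathbf{h}}} = \alpha^{-1}\,\Delta_{\mu,\mathbf{h}}$, as claimed.

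There is no genuine obstacle here: the content of the proposition is the observation that the dimension-dependent $\alpha^{n/2}$ factors arising from $\sqrt{|\tilde{\mathbf{h}}|}$ and from the transformation of the density cancel exactly inside the derivative, leaving a single overall $\alpha^{-1}$. The only point meriting care is to use the convention of Definition~\ref{def21} consistently, so that $\rho$ is measured against the Riemannian volume of $\mathbf{h}$ rather than against Lebesgue measure; once this is fixed, the argument reduces to the one-line substitution above.
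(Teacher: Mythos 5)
Your computation is correct: with the density taken relative to the Riemannian volume (as in Definition~\ref{def21}), the scaling laws $\sqrt{|\tilde{\mathbf{h}}|}=\alpha^{n/2}\sqrt{|\mathbf{h}|}$, $\tilde{h}^{ij}=\alpha^{-1}h^{ij}$ and $\tilde\rho\sqrt{|\tilde{\mathbf{h}}|}=\alpha\rho\sqrt{|\mathbf{h}|}$ make the integrand of \eqref{lapop} invariant and leave exactly one overall factor $\alpha^{-1}$. The paper itself does not spell out a proof but defers to \cite[Proof of Proposition~3.1]{MO}, which is the same direct verification, so your argument matches the intended one and fills in the details (in particular the point about the convention for $\rho$) explicitly.
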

	\begin{proof} 
		See \cite[Proof of Proposition~3.1]{MO}.
	\end{proof}
	Next, let us consider  the weighted manifolds $(\Sigma,\mathbf{h},\mu)$ and $(\Sigma, \mathbf{\tilde{h}}, \tilde{\mu})$, where $d\tilde{\mu}= N^{-2}\,d\mu$ and $\mathbf{\tilde{h}}=N^{-2}\,\mathbf{h}$. 
	Applying the last proposition  we rewrite the operator $w^2$ as a weighted Laplace-Beltrami operator plus a potential, i.e.,
	$$w^2=-\tilde{\Delta}_{ \tilde{\mu},\tilde{\mathbf{h}}}+V,$$
	and by using  \cite[Theorem~1.1]{S01} (see Appendix \ref{thm:sch})  we obtain the following theorem. 
	
	\begin{theorem}\label{mt} Let the Riemannian manifold $(\Sigma,\tilde{\mathbf{h}})$ be geodesically complete and let the	potential $V\in L^2_{loc}(\Sigma, \tilde{\mu})$ be such that it can be written as $V = V_+ + V_-$, where $V_+\in L^2_{loc}(\Sigma , \tilde{\mu})\geq 0$ and $V_-\in L^2_{loc}(\Sigma , \tilde{\mu})\leq 0$
		point-wise.  Moreover, let the operator    $w^2$ (from Equation~\eqref{op}) be semi-bounded from below. Then, the operator  $w^2$ is an essentially self-adjoint operator on $C_0^{\infty}(\Sigma )\subset L^{2}(\Sigma,  \tilde{\mu})$. 
	\end{theorem}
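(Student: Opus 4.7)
The plan is to apply the abstract essential self-adjointness criterion for Schr\"odinger-type operators on weighted Riemannian manifolds due to Shubin~\cite{S01} (quoted as Theorem in Appendix~\ref{thm:sch}). To invoke it, I first need to rewrite $w^{2}$ in the canonical form $-\tilde{\Delta}_{\tilde{\mu},\tilde{\mathbf{h}}}+V$ on a single weighted manifold, absorbing the prefactor $N^{2}$ standing in front of $\Delta_{\mu,\mathbf{h}}$ in Equation~\eqref{op} into a conformal rescaling of the geometry and the measure.

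Concretely, I would set $\alpha:=N^{-2}$, which is smooth and strictly positive since $N\in C^{\infty}(\Sigma,\R^{+})$. Defining $\tilde{\mathbf{h}}:=N^{-2}\mathbf{h}$ and $d\tilde{\mu}:=N^{-2}\,d\mu$ (with $d\mu=\rho\sqrt{|\mathbf{h}|}\,dx$ as in Proposition~\ref{p1} and Lemma~\ref{lem:rho}), Proposition~\ref{p5} gives $\tilde{\Delta}_{\tilde{\mu},\tilde{\mathbf{h}}}=N^{2}\Delta_{\mu,\mathbf{h}}$, so that
\[
w^{2} \;=\; -N^{2}\,\Delta_{\mu,\mathbf{h}}+V \;=\; -\tilde{\Delta}_{\tilde{\mu},\tilde{\mathbf{h}}}+V
\]
acts naturally on $L^{2}(\Sigma,\tilde{\mu})$. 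Since $N$ is smooth and strictly positive, the spaces $L^{2}_{loc}(\Sigma,\mu)$ and $L^{2}_{loc}(\Sigma,\tilde{\mu})$ coincide as sets of functions, so the potential hypothesis in the theorem statement transports between the two weighted structures without change, and $C_{0}^{\infty}(\Sigma)$ is a natural common domain.

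With $w^{2}$ in canonical form it remains to verify the three hypotheses of Shubin's theorem: geodesic completeness of $(\Sigma,\tilde{\mathbf{h}})$, the splitting $V=V_{+}+V_{-}$ with $V_{\pm}\in L^{2}_{loc}(\Sigma,\tilde{\mu})$ of the prescribed signs, and semi-boundedness of $w^{2}$ from below on $C_{0}^{\infty}(\Sigma)$. All three are precisely the assumptions of Theorem~\ref{mt}, so the criterion of~\cite{S01} applies directly and yields essential self-adjointness of $w^{2}$ on $C_{0}^{\infty}(\Sigma)\subset L^{2}(\Sigma,\tilde{\mu})$.

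I do not anticipate any genuine obstacle beyond this bookkeeping: the actual mathematical content sits upstream, in Proposition~\ref{p1}, which exhibits the spatial part of the Klein--Gordon operator as a weighted Laplace--Beltrami operator plus a potential with the conformal factor $N^{2}$ in front, and in Proposition~\ref{p5}, which shows that this factor can be absorbed into a rescaled weight. The only subtlety worth flagging is the consistency of the measure-theoretic conventions between $\mu$ and $\tilde{\mu}$, which is immediate from the smoothness and strict positivity of $N$ on $\Sigma$.
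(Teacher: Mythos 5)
Your proposal is correct and follows essentially the same route as the paper: conformally rescale with $\alpha=N^{-2}$ via Proposition~\ref{p5} to write $w^{2}=-\tilde{\Delta}_{\tilde{\mu},\tilde{\mathbf{h}}}+V$ on $L^{2}(\Sigma,\tilde{\mu})$, use the smoothness and positivity of $\rho$ and $N$ (Lemma~\ref{lem:rho}) to justify the weighted-manifold structure, and then apply Shubin's criterion (Theorem~\ref{T5}). This matches the paper's proof, which cites exactly Lemma~\ref{lem:rho}, Proposition~\ref{p5} and Theorem~\ref{T5}.
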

	
	\begin{proof}
		The proof is analogous to \cite[Theorem~4.1]{MO} and  follows readily from   Lemma \ref{lem:rho},  Proposition \ref{p5} and Theorem \ref{T5}.
	\end{proof}

	\section{Geodesic Completeness}
	
	In this section we assume that the potential $V$ satisfies all the requirements that are demanded in Theorem \ref{mt}.   The essential self-adjointness of the operator $w^2$ for a manifold $(M=\R\times\Sigma,g)$ is proven in Theorem \ref{mt} for the case that the Riemannian manifold $(\Sigma,\mathbf{\tilde{h}})$ is geodesically complete. Hence, proving geodesic completeness for the aforementioned Riemannian manifold with metric $\tilde{h}_{ij}=N^{-2}{h}_{ij}$ implies essential self-adjointness.
	
	In the context of geodesic completeness (of the respective Riemannian manifold) we use a fundamental theorem  for general time-dependent metrics of the Form (\ref{eq:metric}) given in \cite[Theorem~2.1]{ghc2} (see also \cite{ghc1,ghc3}). Let the spacetime $M$ be given by the product $M=\mathbb{R}\times\Sigma$, where $\Sigma$ is an $n$-dimensional smooth Riemannian manifold. We endow the manifold  $M$  with a $n+1$-dimensional Lorentz  metric $g$ of the form,
	\begin{align}\label{eq:metric}
	g=-N^2(\vec{x},t)dt^2+g_{ij}(\vec{x},t)\,(dx^i+N^i(\vec{x},t)\,dt)\,(dx^j+N^j(\vec{x},t)\,dt) .
	\end{align}
	Here, $N(\vec{x},t)$ is the lapse function, $N^i(\vec{x},t)$ the shift vector and the spatial slices $\Sigma_t=\{t\}\times\Sigma$ of $M$, are spacelike sub-manifolds equipped with the time-dependent metric $g_t=g_{ij}(\vec{x},t)dx^i\,dx^j$.\footnote{Such a product space $M$ is called a \textit{sliced space}. The spacetime is time-oriented by increasing $t$.} Moreover, let the following assumption be met. 
	\begin{assumption} \label{ass:bound} The components of the metric $g$ (Equation \eqref{eq:metric}) have the following bounds.\newline
		\begin{enumerate}
			\item The lapse function is bounded from above and below for all $t$, i.e.,
			$$0<\alpha_{B}\leq N(\vec{x},t)\leq\alpha_{C} ,$$
			where $\alpha_{B}, \alpha_{C}\in\mathbb{R}$.\newline
			\item The scalar product of the shift vectors $N^i(\vec{x},t)$ for the spatial metric is uniformly bounded from above by a number $B$.\newline
			\item The metric $g_{ij}(\vec{x},t)$ is uniformly bounded by the metric $g_{ij}(\vec{x},0)$ for all $t\in\mathbb{R}$ and tangent vectors $u\in T\Sigma$. That is, there exist constants $A,D\in\mathbb{R}>0$ such that
			\begin{align}\label{ineq:metAD}
			A\,g_{ij}(\vec{x},0)u^i\,u^j\leq g_{ij}(\vec{x},t)u^i\,u^j\leq D\,g_{ij}(\vec{x},0)u^i\,u^j.
			\end{align}
		\end{enumerate}
		
	\end{assumption}
	
	\begin{theorem}\label{thm:ghst} 
		For a spacetime manifold $(M,g)$, with metric $g$ that satisfies Assumption~\ref{ass:bound},  the following two statements are equivalent. \newline
		\begin{enumerate}
			\item $(\Sigma, \mathbf{g})$ is a complete Riemannian manifold. \newline
			\item The spacetime $(M,g)$ is globally hyperbolic. 
		\end{enumerate}
	\end{theorem}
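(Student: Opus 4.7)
The plan is to use the Cauchy-surface characterization of global hyperbolicity: a spacetime is globally hyperbolic iff it is strongly causal and admits a Cauchy surface. In the sliced product structure of~\eqref{eq:metric}, strong causality is automatic because $t$ is a smooth time function — its gradient is timelike under the bounds of Assumption~\ref{ass:bound} — so the content of the theorem reduces to showing that each slice $\Sigma_t=\{t\}\times\Sigma$ is met exactly once by every inextendible causal curve precisely when $(\Sigma,\mathbf{g})$ is complete.

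For the direction $(1)\Rightarrow(2)$, take a future-directed causal curve $\gamma(s)=(t(s),x(s))$ and expand $g(\dot\gamma,\dot\gamma)\leq 0$ using~\eqref{eq:metric} to obtain the pointwise inequality
\[
g_{ij}\bigl(\dot x^i+N^i\dot t\bigr)\bigl(\dot x^j+N^j\dot t\bigr)\leq N^2\,\dot t^{\,2}.
\]
Taking the square root in the $g_t$-norm and applying the triangle inequality yields $\|\dot x\|_{g_t}\leq\bigl(N+\sqrt{g_{ij}N^iN^j}\bigr)|\dot t|$. By Assumption~\ref{ass:bound}(1)--(2) the right side is bounded by a uniform constant times $|\dot t|$, and clause~(3) then lets us replace $g_t$ by the reference metric $g_{ij}(\vec x,0)$ up to a further constant. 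Hence the associated spatial length of $x(s)$ over any bounded $t$-interval is finite. If $(\Sigma,\mathbf{g})$ is complete, this confines $x(s)$ to a compact subset of $\Sigma$ whenever $t(s)$ remains bounded. A causal curve can therefore be inextendible only if $t\to\pm\infty$, and since $\dot t>0$ along future-directed causal curves (because $N\geq\alpha_B>0$), such a curve meets each $\Sigma_t$ exactly once, giving the Cauchy property.

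For $(2)\Rightarrow(1)$, I argue by contrapositive. If $(\Sigma,\mathbf{g})$ is incomplete, pick a unit-speed geodesic $c:[0,\ell)\to\Sigma$ with $\ell<\infty$ that cannot be extended to $s=\ell$. Lift it to a curve $\gamma(s)=(t(s),c(s))$ in $M$ by choosing $\dot t(s)$ large enough to make $\gamma$ causal; feasibility follows because the coefficient of $\dot t^{\,2}$ in the causal inequality is bounded away from zero thanks to the uniform bounds on $N$, $N^i$, and $g_{ij}$. By tuning $\dot t$, the $t$-range of $\gamma$ can be kept inside a compact interval while $c(s)$ escapes every compact subset of $\Sigma$. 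The result is a future-inextendible causal curve that misses all sufficiently late slices $\Sigma_t$, contradicting the Cauchy surface property and hence global hyperbolicity.

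The main obstacle is the first direction: the shift vector couples $\dot x$ and $\dot t$ in the causal inequality, so extracting a clean bound on $\|\dot x\|_{g_t}$ requires the triangle inequality in $(T\Sigma,g_t)$ together with the simultaneous use of all three clauses of Assumption~\ref{ass:bound} to pass from $g_t$ to a time-independent comparison metric. The second direction is technically easier, but it relies on calibrating $\dot t$ so that the lift is causal yet keeps $t$ bounded; this is exactly where the uniform lapse lower bound and the upper bound on the spatial metric enter.
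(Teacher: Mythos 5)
The paper itself does not prove Theorem~\ref{thm:ghst}: it is imported from \cite[Theorem~2.1]{ghc2} (see also \cite{ghc1,ghc3}), so your attempt has to be measured against the standard argument for sliced spaces. Your direction $(1)\Rightarrow(2)$ is essentially that argument and is sound: the causal inequality gives $\|\dot x\|_{g_t}\le\bigl(N+\sqrt{g_{ij}N^iN^j}\bigr)|\dot t|$, the three clauses of Assumption~\ref{ass:bound} turn this into a uniform bound with respect to the reference metric $g_{ij}(\vec x,0)$, completeness plus Hopf--Rinow confines the spatial projection to a compact set over bounded $t$-intervals, and the non-imprisonment lemma for strongly causal spacetimes forces $t\to\pm\infty$ along inextendible causal curves, so each slice is Cauchy. (Minor point: $\nabla t$ is timelike because $g^{00}=-N^{-2}<0$, independently of the bounds; the bounds play no role there.)

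Direction $(2)\Rightarrow(1)$ has two genuine gaps. First, global hyperbolicity does not imply that the given slices $\Sigma_t$ are Cauchy surfaces, so producing a future-inextendible causal curve whose $t$-range stays bounded does not contradict~(2); your opening reduction of the theorem to ``every slice is met exactly once'' is valid only in the direction you used for $(1)\Rightarrow(2)$. Second, the feasibility of the causal lift does not follow from Assumption~\ref{ass:bound}: the coefficient of $\dot t^{\,2}$ in $g(\dot\gamma,\dot\gamma)$ is $g_{00}=-N^2+g_{ij}N^iN^j$, and the assumption bounds its size, not its sign; where the shift dominates the lapse no choice of $\dot t$ makes the lift of a prescribed spatial curve causal, and even where $g_{00}<0$ pointwise you would need a uniform negative bound along the incomplete geodesic, whose image is not precompact. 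These gaps cannot be repaired under the stated hypotheses, because the implication itself fails there: take $M=\R\times(0,1)$ with $N\equiv1$, $N^1\equiv2$, $g_{11}\equiv1$, i.e.\ $g=-dt^2+(dx+2\,dt)^2$, which is of the form~\eqref{eq:metric} and satisfies all three clauses of Assumption~\ref{ass:bound}. The substitution $y=x+2t$ identifies $(M,g)$ with the region $0<y-2t<1$ of two-dimensional Minkowski space, which a boost carries to an ordinary slab $\{a<t'<b\}$; hence $(M,g)$ is globally hyperbolic (its Cauchy surfaces are tilted and are not slices $\Sigma_t$), while $(\Sigma,\mathbf g)=\bigl((0,1),dx^2\bigr)$ is incomplete. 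A correct proof of $(2)\Rightarrow(1)$ therefore requires an additional hypothesis, e.g.\ a uniform version of Assumption~\ref{assump:met} keeping $g_{ij}N^iN^j$ bounded away from $N^2$ (which is the regime in which the theorem is actually invoked in Theorem~\ref{thm:51}); cf.\ also the discussion in \cite{ghc3}.
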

	
	\subsection{Globally Hyperbolic Stationary Spacetimes}
	
	In \cite{MO} we proved the self-adjointness of $w^2$  for a large class of globally hyperbolic spacetimes\footnote{With smoothness requirements on $g_{00}$ and semi-boundedness of the potential.} including all static ones with metric of the form,
	\begin{align*}
	g=-N^2(\vec{x})dt^2+g_{ij}(\vec{x})\,dx^i\,dx^j.
	\end{align*}
	The proof relied on the fact that conformal transformations do not change the causal structure and on a theorem concerning ultra-static spacetimes given in \cite{K78}. Next, we prove the essential self-adjointness of $w^2$  for all globally hyperbolic stationary spacetimes  of the form
	\begin{align*}
	g=-N^2(\vec{x})dt^2+g_{ij}(\vec{x})\,(dx^i+N^i\,dt)\,(dx^j+N^j\,dt),
	\end{align*}
	by proving that  $(\Sigma,\tilde{\mathbf{h}})$ is geodesically complete. 
	Hence, we turn our attention to   the case where   the lapse function, the shift vector and the metric are time-independent. Moreover, our focus is to study the geodesic completeness of the conformally transformed metric (where the conformal factor is $N^{-2}$), since this is the remaining ingredient for the proof of essential self-adjointness.
	\begin{theorem}\label{thm:51}
		Let $(M,g)$ be a stationary globally hyperbolic spacetime of the form given in Equation (\ref{eq:metric}). Then, the Riemannian manifold with the conformally transformed metric $(\Sigma, \mathbf{\tilde{h}})$ is complete.
	\end{theorem}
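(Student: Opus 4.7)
The plan is to apply Theorem~\ref{thm:ghst} not to $g$ directly, but to the conformally rescaled spacetime $\tilde g := N^{-2} g$. The point is that conformal changes preserve the causal structure, so $(M,\tilde g)$ is again globally hyperbolic, and simultaneously $\tilde g$ admits an obvious sliced decomposition that is time-independent and whose spatial part is closely related to $\tilde{\mathbf{h}}$. I would then extract completeness of $(\Sigma,\tilde{\mathbf{h}})$ from completeness of the spatial slice of $\tilde g$ by a simple pointwise metric comparison.

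Concretely, rewriting
\[
\tilde g = -dt^2 + N^{-2}\,g_{ij}\,(dx^i + N^i\,dt)\,(dx^j + N^j\,dt)
\]
puts $\tilde g$ into the form~\eqref{eq:metric} with lapse $\tilde N \equiv 1$, shift $\tilde N^i = N^i$, and spatial metric $\tilde g_{ij} = N^{-2} g_{ij}$. Assumption~\ref{ass:bound} is then straightforward to check: items~1 and~3 are immediate (the lapse is constant and the spatial metric is time-independent, so $A=D=1$), while the shift norm
\[
\tilde g_{ij}\,\tilde N^i\,\tilde N^j = N^{-2} N_k N^k
\]
is uniformly bounded by $1$ as a direct consequence of the stationarity condition $N_k N^k < N^2$ from Assumption~\ref{assump:met}. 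The equivalence in Theorem~\ref{thm:ghst} applied to $(M,\tilde g)$ then yields that the Riemannian manifold $(\Sigma, N^{-2}\mathbf{g})$ is complete.

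To pass from this to completeness of $(\Sigma, \tilde{\mathbf{h}})$, I would use the identity
\[
\tilde h_{ij} - N^{-2} g_{ij} = \frac{N_i\, N_j}{N^{2}\,(N^2 - N_k N^k)},
\]
whose right-hand side is a rank-one positive semi-definite tensor by Assumption~\ref{assump:met}. Hence $\tilde{\mathbf{h}} \geq N^{-2}\mathbf{g}$ pointwise as bilinear forms, so $L_{\tilde{\mathbf{h}}}(\gamma) \geq L_{N^{-2}\mathbf{g}}(\gamma)$ for every piecewise smooth curve~$\gamma$ in $\Sigma$. Since every curve that leaves every compact subset of $\Sigma$ has infinite length with respect to the complete metric $N^{-2}\mathbf{g}$, the same curve has infinite length with respect to $\tilde{\mathbf{h}}$, which by the Hopf--Rinow theorem is equivalent to geodesic completeness of $(\Sigma,\tilde{\mathbf{h}})$.

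The main obstacle is more conceptual than technical: the entire argument hinges on the observation that the stationarity inequality $N_k N^k < N^2$ is exactly what is needed to secure the uniform shift bound of Assumption~\ref{ass:bound} after conformal rescaling, combined with the standard but non-trivial fact that global hyperbolicity is a conformal invariant. Once these two inputs are in place, identifying the sliced form of $\tilde g$ and carrying out the length-domination step are essentially bookkeeping and present no further difficulty.
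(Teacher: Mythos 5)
Your proposal is correct and follows essentially the same route as the paper: conformally rescale to $\tilde g = N^{-2}g$, invoke conformal invariance of global hyperbolicity, verify Assumption~\ref{ass:bound} (with the shift bound $\tilde g_{ij}N^iN^j<1$ coming from Assumption~\ref{assump:met}) to apply Theorem~\ref{thm:ghst} and obtain completeness of $(\Sigma,N^{-2}\mathbf{g})$, and then pass to $\tilde{\mathbf{h}}$ via the pointwise domination $\tilde{\mathbf{h}}\geq N^{-2}\mathbf{g}$. The only cosmetic difference is that you prove the final comparison step directly by the length/Hopf--Rinow argument, whereas the paper cites Gordon's completeness criterion \cite[Theorem~1, Remark~2]{WG} for the same fact.
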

	\begin{proof}
		The conformally transformed stationary spacetime $(M,\tilde{g})$, given by a metric of the form 
		\begin{align*}
		\tilde{g}=-dt^2+\tilde{g}_{ij}(\vec{x})\,(dx^i+N^i\,dt)\,(dx^j+N^j\,dt),
		\end{align*}
		with $\tilde{\mathbf{g}}=N^{-2}\mathbf{g}$, is globally hyperbolic since conformal transformations do not change the causal structure (see \cite[Appendix D]{WA}). Moreover, the conformally transformed metric satisfies the requirements of Theorem \ref{thm:ghst}; The   Lapse function is equal to one and therefore  bounded (possesses a lower and upper bound), and the scalar products of the conformally transformed spatial metric shift vectors   are also bounded from above. This follows easily from Assumption \ref{assump:met}, i.e., by   Inequality~\eqref{eq:c},
		from which we have,
		\begin{align*}
		-1+N^{-2}g_{ij}N^iN^j=-1+\tilde{g}_{ij}N^iN^j<0.
		\end{align*}
		The needed inequality follows, i.e.,
		\begin{align*}
		\tilde{g}_{ij}N^iN^j<1.
		\end{align*}
		Since the conformally transformed spatial metric  is time-independent it is uniformly bounded by itself. That is,
		\begin{align*}
		A\,\tilde{g}_{ij}u^i\,u^j\leq \tilde{g}_{ij}(\vec{x} )u^i\,u^j\leq D\,\tilde{g}_{ij}u^i\,u^j, 
		\end{align*}
		for constants $0<A\leq1$ and $D\geq1$ and tangent vectors $u\in T\Sigma$. Henceforth, we conclude from the global hyperbolicity of $(M,\tilde{g})$ and Theorem \ref{thm:ghst} that $(\Sigma, \mathbf{\tilde{g}})$ is a complete Riemannian manifold.  Since, the metric $\mathbf{\tilde{h}}$ is positive and the difference $\mathbf{\tilde{h}}-\mathbf{\tilde{g}}$ is positive semi-definite at every point,  it follows that the  Riemannian manifold $\Sigma$ with the metric $\mathbf{\tilde{h}}$ is complete \cite[Theorem~1, Remark~2]{WG}.
	\end{proof}
	Hence, by this theorem, essential self-adjointness of the operator $w^2$ (from Equation~\ref{op}) holds for  \textbf{all globally hyperbolic stationary spacetimes} that are smooth (and with suitable requirements on the potential, see \cite[Theorem~1.1]{S01}). This theorem is the generalization of a previous result, see \cite{MO}, to globally hyperbolic {\em{stationary}} spacetimes.
	
	\subsection{Example: Kerr Metric}
	
	An interesting and intensely studied model in the context of stationary spacetimes is the Kerr metric
	which describes a rotating, stationary, axially symmetric black hole.
	In this section we explain how our methods can be adapted such as to apply
	to the spatial Klein-Gordon operator in the Kerr geometry. The reason why such an adaptation is necessary
	is that our above proof of essential self-adjointness relied on the Assumption~\ref{assump:met}, 
	\begin{equation*} 
	-N^{2}+N_iN^i<0. 
	\end{equation*}
	If this term is  positive or equal to zero, our main Theorem~\ref{mt} does not apply.
	This is indeed  the case for the Kerr spacetime $(M,g_\text{Kerr})$ in Boyer-Lindquist coordinates.	Using the notation in~\cite{FF03}, we write the metric in Boyer-Lindquist coordinates~$(t,r,\vartheta,\varphi)$
	with $r>0$, $0\leq\vartheta\leq\pi$, $0\leq\varphi\leq2\pi$ as
	\begin{equation}\label{eq:metker}
	g_\text{Kerr}=-\frac{\Delta}{U}(dt-a\,\sin^2\vartheta\,d\varphi)^2+U\left(\frac{dr^2}{\Delta}+d\vartheta^2\right)+\frac{\sin^2\vartheta}{U}(a\,dt-(r^2+a^2)d\varphi)^2,
	\end{equation}
	with 
	\begin{align*}
	U(r,\vartheta)=r^2+a^2\cos^2\vartheta,\qquad \qquad \Delta(r)=r^2-2M\,r+a^2,
	\end{align*}
	where $M$ and $aM$ denote the mass and the angular momentum of the black hole, respectively.
	As in \cite{FF03} we restrict attention to the non-extreme case $M^2\geq a^2$ and to the region $r>r_1= M +\sqrt{M^2-a^2}$ outside the event horizon, implying that~$\Delta>0$. This metric is of the form~\eqref{eq:metric1b} with
	\[ N = \frac{\Delta\, U}{\sigma^2} \qquad \text{and} \qquad N_i = \frac{2 M r\, a \sin^2 \vartheta}{U} \;(0,0,  1) \]
	and
	\[ \sigma^2 := (r^2+a^2)\, U + 2 M r \,a^2 \sin^2 \vartheta \:. \]
	A short computation shows that Assumption~\ref{assump:met} does not hold in the region \begin{align*}
	r^2-2Mr+a^2\cos^2\vartheta<0 ,
	\end{align*} 
	referred to as the {\em{ergosphere}}. In particular, on the outer ergosurface we have $-N^{2}+N_iN^i=0.$  
	Hence,  we are not able to introduce a weighting factor as simple as before.  Nevertheless, a simple rewriting of the operator $w^2$ supplies the desired result.   First, let us rewrite the operator $w^2$ as follows,
	\begin{align*}
	w^2&=  -N^2(\sqrt{|g|})^{-1}\partial_i\left(   \sqrt{|g|}    h^{ij}     \partial_j \right)+V\\&= 
	-N^2(\sqrt{|g|})^{-1}\partial_i\left(   \sqrt{|g|}    g^{ij}     \partial_j \right)
	+  N^2(\sqrt{|g|})^{-1}\partial_i\left(   \sqrt{|g|}    N^{-2} N^iN^j   \partial_j \right)+V
	\\&=  
	-N^2(\rho\sqrt{|\textbf{g}|})^{-1}
	\partial_i\left(   \rho \,\sqrt{|\textbf{g}|}\,  g^{ij}     \partial_j \right)
	+  N^2(\sqrt{|g|})^{-1}\partial_i\left(   \sqrt{|g|}    N^{-2} N^iN^j  \partial_j \right)+V
	\\&=  
	-N^2\Delta_{\mu,\textbf{g}}
	+  N^2(\sqrt{|g|})^{-1}\partial_i\left(   \sqrt{|g|}    N^{-2} N^iN^j     \partial_j \right)+V,
	\end{align*}
	where we defined the variable $\rho:=(\sqrt{|g|})
	(\sqrt{|\textbf{g}|})^{-1}$ and $\Delta_{\mu,\textbf{g}}$ is a weighted Laplace operator.
	Since the vector $N^i$ only has a $\varphi$-component and the
	components of the Kerr metric do not depend explicitly on $\varphi$, we obtain, 
	\begin{align}\label{eq:w2kerr}
	w^2&=  
	-N^2\Delta_{\mu,\textbf{g}}
	+     N^i N^j     \partial_i\partial_j  +V. 
	\end{align} 
	This operator is symmetric on the Hilbert space~$L^2(\Sigma, d\tilde{\mu})$ with domain~$C^\infty_0(\Sigma)$,
	where,
	\[ d\tilde{\mu} = N^{-2}\, d\mu = N^{-2}\, dr \,d\cos \vartheta \,d\varphi \:. \]
	In order to prove essential self-adjointness, it suffices to consider each $\varphi$-mode
	separately. Indeed, the Hilbert space~$L^2(\Sigma, d\tilde{\mu})$ is the orthogonal direct sum of the
	azimuthal modes, i.e.,
	\begin{equation} L^2(\Sigma, d\tilde{\mu}) = \bigoplus_{k \in \Z} {\mathcal{H}}_k \qquad \text{with} \qquad
	{\mathcal{H}}_k := \big\{ \Phi \in L^2(\Sigma, d\tilde{\mu}) \:\big|\: \Phi(r,\vartheta, \varphi) = e^{-i k \varphi}\: \Xi(r, \vartheta) \big\} \:. \label{eq:secdec} \end{equation}
	Moreover, the operator~$w^2$ is invariant on each of the direct summands. Therefore, the self-adjoint extension of $w^2$ is simply the direct sum of the self-adjoint extensions of~$w^2$ on each~${\mathcal{H}}_k$.
	With this in mind, we may restrict attention to a single azimuthal mode~$k \in \Z$.
	Then for any~$\Phi \in {\mathcal{H}}_k$ as above, we have
	\begin{equation}
w^2\, \Phi =  e^{-i k \varphi}\,{w}^2_k\, e^{i k \varphi}\,  \Phi \:,
\label{eq:relwwk}
\end{equation}
	where~${w}_k^2$ is the operator
	\begin{align}\nonumber
{w}_k^2 &= -N^2\Delta_{\mu,\textbf{g}} -  \frac{1}{4}    \beta^2 +V
	\\&=-\Delta_{\tilde{\mu} ,\tilde{\mathbf{g}}} -\frac{1}{4}   \beta^2   + {V}, \label{eq:wker}
	\end{align}
	with metric  $\tilde{\mathbf{g}}:=N^{-2}{\mathbf{g}}$ and $\beta:=kN^3$ as given in \cite[Equation 2.20]{FF03}. To provide  a proof of essential self-adjointness we first prove that the manifold  $(\Sigma,\tilde{\mathbf{g}})$ is geodesically complete. In order to do so, we supply the following result. 
	\begin{lemma} \label{lemmanewmetric1} 
	The Riemannian manifold $(\Sigma,\hat{\mathbf{g}})$  with  metric
	\begin{equation} \label{newmetric}
	\hat{\mathbf{g}}= \frac{\sigma^2}{\Delta^2} \: dr^2 + \frac{\sigma^2}{\Delta} \big( d\vartheta^2 + \sin^2 \vartheta\: d\varphi^2 \big) \:,
	\end{equation}
	 is geodesically complete. 
	\end{lemma}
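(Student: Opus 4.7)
The plan is to invoke the Hopf--Rinow theorem, reducing geodesic completeness of $(\Sigma, \hat{\mathbf{g}})$ to the Heine--Borel property: closed metric balls are compact. Since $\Sigma = (r_1, \infty) \times S^2$, where $r_1 = M + \sqrt{M^2 - a^2}$ lies just outside the outer event horizon, and since $S^2$ is compact, it suffices to show that the $r$-coordinate of any point reachable from a fixed basepoint by a curve of bounded $\hat{\mathbf{g}}$-length remains in a compact sub-interval of $(r_1, \infty)$.

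First I would establish the uniform lower bound $\sigma^2 \geq r^4$. A short algebraic manipulation using $r^2+a^2-2Mr = \Delta$ gives the identity
\begin{equation*}
\sigma^2 = (r^2+a^2)^2 - a^2\, \Delta\, \sin^2\vartheta,
\end{equation*}
whose minimum over $\vartheta$ at fixed $r > r_1$ is $r^2(r^2+a^2) + 2Mra^2 \geq r^4$. Since $\hat{\mathbf{g}}$ dominates its pure $dr^2$-piece, any piecewise smooth curve $\gamma(s) = (r(s), \vartheta(s), \varphi(s))$ in $\Sigma$ satisfies the pointwise estimate
\begin{equation*}
\sqrt{\hat{\mathbf{g}}(\dot\gamma,\dot\gamma)} \;\geq\; \frac{\sigma}{\Delta}\, |\dot r| \;\geq\; \frac{r^2}{\Delta(r)}\, |\dot r|.
\end{equation*}

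Next I would introduce the one-variable primitive $F(r) := \int_{r_*}^{r} s^2/\Delta(s)\, ds$ for a fixed $r_* \in (r_1, \infty)$. Using the factorization $\Delta = (r-r_1)(r-r_2)$ in the non-extremal case (respectively $\Delta = (r-M)^2$ in the extremal case $M^2 = a^2$), the integrand has a simple (respectively double) pole at $r_1$, so $F(r) \to -\infty$ as $r \to r_1^+$; at the other end, $r^2/\Delta \to 1$ as $r \to \infty$, so $F(r) \to +\infty$. Thus $F : (r_1, \infty) \to \mathbb{R}$ is a proper, strictly monotone function. Integrating the radial estimate along any curve $\gamma : [0,T] \to \Sigma$ with $\gamma(0) = (r_*, \vartheta_0, \varphi_0)$ and using that $F' > 0$,
\begin{equation*}
L(\gamma) \;\geq\; \int_0^T F'(r(s))\, |\dot r(s)|\, ds \;\geq\; |F(r(T)) - F(r_*)|,
\end{equation*}
so the endpoint's $r$-coordinate lies in the set $\{r : |F(r) - F(r_*)| \leq L(\gamma)\}$, which by properness of $F$ is a compact sub-interval $[r_{\min}, r_{\max}] \subset (r_1, \infty)$.

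Combined with compactness of $S^2$, this shows that any closed $\hat{\mathbf{g}}$-ball about $(r_*, \vartheta_0, \varphi_0)$ is a closed subset of the compact cylinder $[r_{\min}, r_{\max}] \times S^2$, hence compact; by Hopf--Rinow, $(\Sigma, \hat{\mathbf{g}})$ is geodesically complete. The only step requiring any care is the $\vartheta$-uniform bound $\sigma^2 \geq r^4$, which kills the angular dependence in the radial length element; once this is in place, the remainder reduces to the elementary divergence check for $\int r^2/\Delta\, dr$ at both ends of $(r_1, \infty)$, and both the extremal and non-extremal cases are handled by the same argument.
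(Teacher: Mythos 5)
Your proof is correct, but it takes a genuinely different route from the paper. The paper first observes that $\sigma^2/r^4$ is bounded above and below by positive constants on the exterior region, so that $\hat{\mathbf{g}}$ may be replaced by the quasi-isometric warped product metric $\frac{r^4}{\Delta^2}\,dr^2+\frac{r^4}{\Delta}\bigl(d\vartheta^2+\sin^2\vartheta\,d\varphi^2\bigr)$; it then notes that the one-dimensional base is complete because $r^2/\Delta$ is non-integrable both at the horizon and at infinity, and concludes by citing the Bishop--O'Neill completeness criterion for warped products (with the compact fiber $S^2$). You instead argue directly via Hopf--Rinow: the identity $\sigma^2=(r^2+a^2)^2-a^2\Delta\sin^2\vartheta$ gives the $\vartheta$-uniform lower bound $\sigma^2\geq r^4$, the radial part of the metric then dominates $F'(r)^2\,dr^2$ with $F(r)=\int_{r_*}^r s^2/\Delta(s)\,ds$ proper on $(r_1,\infty)$, and properness of $F$ traps closed metric balls inside compact cylinders $[r_{\min},r_{\max}]\times S^2$. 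The trade-off: your argument is self-contained (it essentially re-derives the relevant special case of the warped-product lemma), needs only a one-sided bound on $\sigma^2$ rather than the two-sided comparison, and treats the extremal case $M^2=a^2$ explicitly; the paper's argument is shorter because it delegates the analytic core to the cited lemma, and the two-sided bound it needs is in any case immediate. One cosmetic slip on your side: $r_1=M+\sqrt{M^2-a^2}$ \emph{is} the outer horizon radius rather than a value ``just outside'' it; this does not affect the argument, which only uses $r>r_1$ and $\Delta>0$ there.
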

	\begin{proof}  
		 Clearly, multiplying the metric by a positive function which is bounded from above and below
		does not change completeness properties. Therefore, instead of Metric~\eqref{newmetric} we can consider the
		warped product metric
		\[ \hat{\mathbf{g}}_N := \frac{r^4}{\Delta^2} \: dr^2 + \frac{r^4}{\Delta} \big( d\vartheta^2 + \sin^2 \vartheta\: d\varphi^2 \big) \:. \]
		The radial part is complete because the function~$r^2/\Delta$ is not integrable both at infinity
		and near the horizon. Applying the result on the completeness of warped product
		metrics~\cite[Lemma~7.2]{bishop} (see also~\citep[page~94]{beem}) completes the proof.
	\end{proof}
 By this lemma we have: 
 	\begin{proposition} \label{lemmanewmetric2} 
 The Riemannian manifold $(\Sigma,\tilde{\mathbf{g}})$ is geodesically complete. 
 \end{proposition}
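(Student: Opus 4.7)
The plan is to deduce completeness of $(\Sigma,\tilde{\mathbf{g}})$ from that of $(\Sigma,\hat{\mathbf{g}})$ (Lemma~\ref{lemmanewmetric1}) by exhibiting $\tilde{\mathbf{g}} - \hat{\mathbf{g}}$ as a positive semi-definite tensor, and then invoking the same comparison principle \cite[Theorem~1, Remark~2]{WG} that was applied at the end of the proof of Theorem~\ref{thm:51}.

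First I would read off the spatial Kerr metric from~\eqref{eq:metker}. Expanding the two squares and simplifying with the identity $(r^2+a^2)^2 - \Delta\, a^2\sin^2\vartheta = \sigma^2$ gives
\[ \mathbf{g} = \frac{U}{\Delta}\,dr^2 + U\,d\vartheta^2 + \frac{\sigma^2\sin^2\vartheta}{U}\,d\varphi^2. \]
Combined with $N^{-2} = \sigma^2/(\Delta\, U)$ (equivalently, $g^{00} = -\sigma^2/(\Delta\, U)$ for the Kerr metric) this yields
\[ \tilde{\mathbf{g}} = N^{-2}\,\mathbf{g} = \frac{\sigma^2}{\Delta^2}\,dr^2 + \frac{\sigma^2}{\Delta}\,d\vartheta^2 + \frac{\sigma^4 \sin^2\vartheta}{\Delta\, U^2}\,d\varphi^2. \]
Comparing with $\hat{\mathbf{g}}$ in~\eqref{newmetric}, the $dr^2$ and $d\vartheta^2$ coefficients already coincide, so the difference lives purely in the $d\varphi^2$ direction:
\[ \tilde{\mathbf{g}} - \hat{\mathbf{g}} = \frac{\sigma^2\sin^2\vartheta}{\Delta}\cdot\frac{\sigma^2 - U^2}{U^2}\,d\varphi^2. \]

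The next step is a short algebraic check that $\sigma^2 \geq U^2$ on the exterior region $r>r_1$. Writing out the definitions,
\[ \sigma^2 - U^2 = U\bigl[(r^2+a^2)-U\bigr] + 2Mr\,a^2\sin^2\vartheta = a^2\sin^2\vartheta\,\bigl(U + 2Mr\bigr) \geq 0, \]
since $U>0$ and $r>0$. Hence $\tilde{\mathbf{g}}-\hat{\mathbf{g}}$ is a non-negative multiple of $d\varphi^2$ and therefore positive semi-definite as a symmetric bilinear form on $T\Sigma$.

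Finally, Lemma~\ref{lemmanewmetric1} supplies geodesic completeness of $(\Sigma,\hat{\mathbf{g}})$, and the comparison result \cite[Theorem~1, Remark~2]{WG}, applied exactly as at the close of Theorem~\ref{thm:51}, then gives completeness of $(\Sigma,\tilde{\mathbf{g}})$. The only non-routine step is the identity $\sigma^2 - U^2 = a^2\sin^2\vartheta\,(U+2Mr)$; once this is in hand, positive semi-definiteness is immediate (both metrics are diagonal in Boyer--Lindquist coordinates) and the completeness comparison is a black-box invocation, so I do not anticipate any substantial obstacle beyond bookkeeping.
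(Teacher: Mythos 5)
Your proposal is correct and follows essentially the paper's own route: both arguments reduce to comparing the diagonal metrics $\tilde{\mathbf{g}}$ and $\hat{\mathbf{g}}$, whose $rr$- and $\vartheta\vartheta$-components coincide, and your identity $\sigma^2-U^2=a^2\sin^2\vartheta\,(U+2Mr)$ is exactly the paper's rewriting of $\sigma^2/U^2$ as $1+a^2\sin^2\vartheta/U+2Mr\,a^2\sin^2\vartheta/U^2$. The only (harmless) difference is that you invoke the one-sided domination criterion \cite[Theorem~1, Remark~2]{WG} (as the paper itself does in Theorem~\ref{thm:51}), so you can dispense with the upper bound $\sigma^2/U^2\leq f$ that the paper proves in order to get full strong equivalence.
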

 \begin{proof} 
 	We   prove that the metric~$\hat{\mathbf{g}}$ 
 is strongly equivalent to  the metric~$\tilde{\mathbf{g}}$ w.r.t.\ the topology induced by the norms. By strong equivalence and Lemma \ref{lemmanewmetric1} geodesic completeness follows.
Strong  equivalence requires proving the following inequalities,
 	\begin{align*}
 	e\,\hat{g}_{ij}u^i\,u^j\leq \tilde{g}_{ij} u^i\,u^j\leq f\,\hat{g}_{ij}u^i\,u^j, 
 	\end{align*}
 	for constants $0<e\leq1$ and $f\geq1$ and tangent vectors $u\in T\Sigma$. Since we have   diagonal metrics with equal  $\tilde{g}_{rr}= \hat{g}_{rr}$,   $\tilde{g}_{\vartheta\vartheta}= \hat{g}_{\vartheta\vartheta}$  components we only need to concentrate on the third component, 
 	\begin{align*}
 	e\, \leq \frac{\sigma^2}{U^2} \leq f.
 	\end{align*} 
 	Since one can rewrite the term ${\sigma^2}/{U^2}$ as 
 	$$1+\frac{a^2\sin^2\vartheta}{U}+\frac{2M\,r\,a^2\sin^2\vartheta}{U^2},$$
 the first inequality with $e=1$ is proven. The second inequality follows from the restriction to the region $r>r_1$ outside the event horizon.  
 \end{proof}

  By using the    Lemma~\ref{lemmanewmetric1} and Proposition~\ref{lemmanewmetric2}) we obtain the following result: 
	\begin{theorem}\label{them:mtad} 
		Let the 	potential $\tilde{V}\in L^2_{\text{loc}}(\Sigma, \tilde{\mu})$ be such that it
		can be written as $\tilde{V} = \tilde{V}_+ + \tilde{V}_-$, where $\tilde{V}_+\in L^2_\text{\text{loc}}(\Sigma , \tilde{\mu})\geq 0$ and $\tilde{V}_-\in L^2_{\text{loc}}(\Sigma, \tilde{\mu})\leq 0$
		point-wise. 	Moreover, let the operator $w^2$ (from Equation~\eqref{eq:w2kerr}) be semi-bounded from below.  Then the operator  $w^2$  is an essentially self-adjoint operator on $C_0^{\infty}(\Sigma )\subset L^{2}(\Sigma,  \,\tilde{\mu})$. 
	\end{theorem}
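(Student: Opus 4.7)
The plan is to exploit the axial symmetry of the Kerr geometry and to reduce the problem to the setting of Theorem~\ref{mt} mode-by-mode. Since the components of $g_\text{Kerr}$, the shift vector $N^i$, the lapse $N$ and the potential are all independent of $\varphi$, the operator $w^2$ in \eqref{eq:w2kerr} commutes with the rotations generated by $\partial_\varphi$ and is therefore invariant on each azimuthal subspace $\mathcal{H}_k$ of the orthogonal decomposition \eqref{eq:secdec}. Proving essential self-adjointness of $w^2$ on $C_0^\infty(\Sigma)$ is then equivalent to proving essential self-adjointness of its restriction to each $\mathcal{H}_k$, provided a suitable common core is chosen: the $k$-th Fourier coefficient in $\varphi$ of any function in $C_0^\infty(\Sigma)$ lies in $\mathcal{H}_k$ and is smooth with compact support in $(r,\vartheta)$, so these coefficients supply a natural core on each $\mathcal{H}_k$.

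On $\mathcal{H}_k$ the identity \eqref{eq:relwwk} replaces $w^2$ by the operator $w_k^2$ of \eqref{eq:wker}, which has the clean form $-\Delta_{\tilde{\mu},\tilde{\mathbf{g}}} + V_k$ with effective potential $V_k := V - \frac{1}{4}\beta^2$ on the Riemannian manifold $(\Sigma,\tilde{\mathbf{g}})$. This is precisely the framework of Theorem~\ref{mt}, now applied to $\tilde{\mathbf{g}}$ in place of $\tilde{\mathbf{h}}$: the required geodesic completeness of $(\Sigma,\tilde{\mathbf{g}})$ is supplied by Proposition~\ref{lemmanewmetric2}, and after fixing $k$ the problematic shift-vector cross terms have been absorbed into the mode-dependent scalar $-\frac{1}{4}\beta^2$, so the principal part is a pure weighted Laplace-Beltrami operator.

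It then remains to verify the potential hypothesis and semi-boundedness on each mode. I would split $V_k = V_+ + \bigl(V_- - \frac{1}{4}\beta^2\bigr)$; the first summand is non-negative and in $L^2_\text{loc}(\Sigma,\tilde{\mu})$ by assumption, while the correction $-\frac{1}{4}\beta^2 = -\frac{1}{4}k^2 N^6$ is smooth and non-positive on the exterior region $r>r_1$, so the second summand remains in $L^2_\text{loc}$ with the correct sign. Semi-boundedness from below of $w_k^2$ on $\mathcal{H}_k$ is inherited from the assumed semi-boundedness of $w^2$, since $\mathcal{H}_k$ is an invariant subspace and the Rayleigh quotients agree. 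Theorem~\ref{mt} then yields essential self-adjointness of each $w_k^2$, and assembling the self-adjoint extensions as a direct sum over $k\in\Z$ produces a self-adjoint extension of $w^2$ on all of $L^2(\Sigma,\tilde{\mu})$.

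The step I expect to require the most care is the passage from mode-by-mode essential self-adjointness back to essential self-adjointness of $w^2$ on the full core $C_0^\infty(\Sigma)$. Formally one must show that the closure of $w^2|_{C_0^\infty(\Sigma)}$ coincides with the orthogonal sum of the closures of the $w_k^2$, which reduces to checking that the Fourier projections of a single $C_0^\infty$-function supply a core for each individual mode operator, with due attention to the coordinate singularity at $\vartheta\in\{0,\pi\}$. This is a standard consequence of the commutation of $w^2$ with the compact $\varphi$-rotation group, but keeping track of the core property cleanly is the main bookkeeping step of the argument.
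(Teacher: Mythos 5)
Your proposal is correct and follows essentially the same route as the paper's proof: decomposition into azimuthal sectors \eqref{eq:secdec}, passage to $w_k^2$ via the unitary relation \eqref{eq:relwwk}, essential self-adjointness of $w_k^2$ from the completeness of $(\Sigma,\tilde{\mathbf{g}})$ (Proposition~\ref{lemmanewmetric2}) combined with Shubin's theorem (Theorem~\ref{T5} via Proposition~\ref{p5}, i.e.\ the content of Theorem~\ref{mt}), absorption of $-\tfrac14\beta^2$ into the negative part of the potential, and reassembly of the sectors. The only difference is one of bookkeeping order — the paper proves $w_k^2$ essentially self-adjoint on all of $C_0^\infty(\Sigma)$ and then restricts to the invariant sector, while you argue mode-by-mode and flag the core/semi-boundedness transfer explicitly — which does not change the substance of the argument.
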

	\begin{proof} 
	 The proof that ${w}_k^2$  is essentially self-adjoint on $C_0^{\infty}(\Sigma)\subset L^2(\Sigma, d\tilde{\mu})$ follows readily from Proposition \ref{p5}, Theorem \ref{T5}
	and the completeness of the metric $\tilde{\mathbf{g}}$ by Lemma~\ref{lemmanewmetric2}. Since ${w}_k^2$ leaves the sectors $\mathcal{H}_k$ for $k\in\Z$ invariant, recall the decomposition~(\ref{eq:secdec}), it is in particular essentially self-adjoint on each sector. But since the operator $w^2$ is unitary equivalent to ${w}_k^2$ on the sector $\mathcal{H}_k$, recall (\ref{eq:relwwk}), it is essentially self-adjoint there. Combining the sectors we find that ${w}^2$ is essentially self-adjoint on all of $C_0^{\infty}(\Sigma)\subset L^2(\Sigma, d\tilde{\mu})$.
	\end{proof}
	This theorem has the following immediate consequence.
	\begin{corollary}
		Let the potential term $V$ be equal to zero. Then, the spatial derivative  operator  $w^2$
		is an essentially self-adjoint operator on $C_0^{\infty}(\Sigma )\subset L^{2}(\Sigma,  \,\tilde{\mu})$. 
	\end{corollary}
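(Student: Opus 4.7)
The plan is to derive the corollary from Theorem~\ref{them:mtad} by verifying its hypotheses in the special case $V=0$. The subtle point is that setting the external potential to zero does not eliminate the effective potential in the $\varphi$-mode-decomposed operator: after restricting to an azimuthal sector $\mathcal{H}_k$ via the decomposition~\eqref{eq:secdec}, the operator $w_k^2$ of~\eqref{eq:wker} still carries the residual term $\tilde{V}_k := -\tfrac{1}{4}\beta^2 = -\tfrac{1}{4}k^2 N^6$, produced by the shift contribution $N^iN^j\partial_i\partial_j$ acting on the phase $e^{-ik\varphi}$. The task therefore reduces to furnishing the required $L^2_{\text{loc}}$ splitting of this residual potential and to verifying that $w_k^2$ is semi-bounded from below.

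For the splitting, since $\tilde{V}_k \leq 0$ pointwise, the natural choice is $\tilde{V}_+ \equiv 0$ and $\tilde{V}_- := \tilde{V}_k$. Because $\Delta$, $U$, and $\sigma^2$ are smooth and strictly positive on the exterior region $\Sigma$, the lapse $N = \Delta U/\sigma^2$, and hence $\beta = kN^3$ and $\tilde{V}_\pm$, are smooth on $\Sigma$; in particular they lie in $L^2_{\text{loc}}(\Sigma,\tilde{\mu})$.

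For semi-boundedness, I would use that $-\Delta_{\tilde{\mu},\tilde{\mathbf{g}}}$ is non-negative on $C_0^\infty(\Sigma)$, as seen from the integration-by-parts identity $\langle -\Delta_{\tilde{\mu},\tilde{\mathbf{g}}} f, f\rangle_{L^2(\Sigma,\tilde{\mu})} = \int_\Sigma \tilde{\mathbf{g}}(\nabla f,\nabla f)\, d\tilde{\mu} \geq 0$. It then suffices to show that $N$ is uniformly bounded on $\Sigma$. An asymptotic analysis of the explicit Kerr coefficients yields this: as $r\to\infty$ one has $\Delta\sim r^2$, $U\sim r^2$, $\sigma^2\sim r^4$, hence $N\to 1$; as $r\to r_1^+$, one has $\Delta\to 0$ while $U$ and $\sigma^2$ remain bounded away from zero, hence $N\to 0$; continuity handles the intermediate compact regime, yielding $\sup_\Sigma N<\infty$ and consequently $w_k^2 \geq -\tfrac{1}{4}k^2\sup_\Sigma N^6 > -\infty$ on each sector.

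With both hypotheses of Theorem~\ref{them:mtad} thus verified, the essential self-adjointness of $w^2$ on $C_0^\infty(\Sigma)\subset L^2(\Sigma,\tilde{\mu})$ follows at once. The only genuinely non-mechanical ingredient I anticipate is the global boundedness of $N$, which is not immediate from smoothness alone but is easily extracted from the explicit form of the Kerr metric coefficients as sketched above.
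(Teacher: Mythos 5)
Your proposal is correct and follows essentially the same route as the paper: the paper's own proof likewise deduces semi-boundedness for $V=0$ from the positivity of $-\Delta_{\tilde{\mu},\tilde{\mathbf{g}}}$ together with the bound $\vert\beta\vert\leq c$ (your $\sup_\Sigma N<\infty$ argument just makes that bound explicit), and then invokes Theorem~\ref{them:mtad}. The additional verifications you supply (the $L^2_{\text{loc}}$ splitting with $\tilde{V}_+=0$, $\tilde{V}_-=-\tfrac{1}{4}\beta^2$, and the sector-wise, $k$-dependent lower bound) are consistent with how the paper's sectorial argument is meant to be read.
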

	\begin{proof}
		For the case $V=0$ it follows readily from the bound $\vert\beta\vert\leq c$, that the operator $w^2$ is semi-bounded (since $-\Delta_{\tilde{\mu} ,\tilde{\mathbf{g}}}$ is a positive operator). Hence, Theorem \ref{them:mtad} concludes the proof.
	\end{proof}
 
	\subsection{Non-Globally Hyperbolic Spacetimes}
	
	The advantage of studying the class of global hyperbolic spacetimes $M=\R\times\Sigma$, where $\Sigma$ is a Cauchy surface for the Klein-Gordon equation, is   having  a well-posed  Cauchy problem. In particular, by Leray's Theorem~\cite{LJ}, the initial data  on the Cauchy surface $\Sigma$ define a unique solution of the Klein-Gordon equation.  However, there are interesting \textit{non-globally hyperbolic} spacetimes that arise as solutions in general relativity, e.g., the G\"odel-Universe or the anti-de Sitter space, see \cite{HE73}. The unaesthetic feature of such spacetimes is that by definition, there is no initial data surface that is a Cauchy surface. Thus, in such spacetimes the equations of motion cannot predict from initial conditions the outcome of the dynamics in certain regions of this spacetime.
	
	An important step towards addressing the Cauchy problem in certain classes of static non-globally hyperbolic  spacetimes was proposed by \cite{NGH2}, see also \cite{NGH1} in this context. Principally,  in \cite{NGH2} a (unique, see \cite{NGH3}) prescription  of defining the dynamics was given. The main ingredient is finding self-adjoint extensions of the spatial part of the Klein-Gordon equation (i.e.\ $w^2$). Self-adjointness of the operator $w^2$ is proven by using its symmetry and positivity and hence by being able to apply the Friedrichs extension that guarantees the existence of at least one self-adjoint extension. 
	Moreover, in \cite{NGH3} the authors were able to prove the uniqueness of this method and in   \cite{NGH4} the construction was extended to stationary non-globally hyperbolic spacetimes.
	
	In this section we consider non-globally hyperbolic static and stationary spacetimes and study the essential self-adjointness of the operator $w^2$.   
	Let us consider the non-globally hyperbolic spacetime manifold $(M,g)$, where $M=\R\times\Sigma$, with a metric  $g$ given by
	\begin{align*}
	g=-N^2(\vec{x}) dt^2+g_{ij}(\vec{x})\,(dx^i+N^i(\vec{x})\,dt)\,(dx^j+N^j(\vec{x})\,dt),
	\end{align*}
	where $(\Sigma, \textbf{g})$ is a Riemannian manifold and by definition not a Cauchy-surface. Moreover, let 
	the shift function be bounded from below and above (by constants $\alpha_{B}, \alpha_{C}\in\mathbb{R}$). In addition 
	Assumption~\ref{assump:met} holds, which implies $N_iN^i<\alpha_{C}$. Then, by Theorem \ref{thm:ghst} it is clear that for such a non-globally hyperbolic spacetime $M$ the Riemannian manifold $(\Sigma,\mathbf{g})$ is not complete.  Since conformal transformations preserve (non-)global hyperbolicity it is clear that the Riemannian manifold $(\Sigma,\tilde{g};=N^{-2}\mathbf{g})$ is not complete either. However, the following result holds.
	\begin{theorem} \label{thm54}
		Let $(M,g)$ be a non-globally hyperbolic spacetime, where $M=\R\times\Sigma$, with   metric   
		\begin{align*}
		g=-N^2(\vec{x}) dt^2+g_{ij}(\vec{x})\,(dx^i+N^i(\vec{x})\,dt)\,(dx^j+N^j(\vec{x})\,dt).
		\end{align*}
		Moreover,  let $(\Sigma, \textbf{g})$ be a smooth Riemannian manifold,   
		the shift function be bounded from below and above (by constants $\alpha_{B}, \alpha_{C}\in\mathbb{R}$)
		and let the shift vector $\vec{N}$ be the derivative of a   smooth proper function on $\Sigma$, with   bound $\vec{N}^2<N^2$.  Then, the Riemannian manifold $(\Sigma,\mathbf{\tilde{h}})$, where $\tilde{h}_{ij}= N^{-2}{g}_{ij}+ (1-  N^{-2}\vec{N}^2)^{-1}N^{-4}N_iN_j$,  is   geodesically complete. 
	\end{theorem}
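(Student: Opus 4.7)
The plan is to apply the classical completeness criterion: a connected Riemannian manifold is geodesically complete provided it admits a smooth proper function whose gradient has uniformly bounded norm (this follows readily from the Hopf--Rinow theorem and is sometimes attributed to Gordon). The hypothesis of the theorem supplies precisely such a candidate function $f:\Sigma\to\R$ with $\vec{N}=\nabla f$, so the whole task reduces to estimating $|\nabla f|_{\tilde{\mathbf{h}}}$ uniformly on $\Sigma$.

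The first step is to write down the inverse metric $\tilde{h}^{ij}$ in explicit form. Because $\tilde{\mathbf{h}}=N^{-2}\mathbf{h}$ we have $\tilde{h}^{ij}=N^{2}h^{ij}$, and the formula $h^{ij}=g^{ij}-N^{-2}N^{i}N^{j}$ from Section~2 immediately gives
\begin{align*}
\tilde{h}^{ij}=N^{2}g^{ij}-N^{i}N^{j}.
\end{align*}
One can verify this directly from the $\tilde{h}_{ij}$ stated in the theorem via a Sherman--Morrison rank-one update calculation.

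The second step is to compute the gradient norm. Since $\vec{N}=\nabla f$, the associated one-form has components $\partial_{i}f=N_{i}$, and therefore
\begin{align*}
|\nabla f|^{2}_{\tilde{\mathbf{h}}}=\tilde{h}^{ij}\,\partial_{i}f\,\partial_{j}f=N^{2}\vec{N}^{2}-(\vec{N}^{2})^{2}=\vec{N}^{2}\bigl(N^{2}-\vec{N}^{2}\bigr).
\end{align*}
Both factors are nonnegative by the assumed bound $\vec{N}^{2}<N^{2}$, and since the two summands add to $N^{2}$, the arithmetic-geometric-mean inequality gives $|\nabla f|^{2}_{\tilde{\mathbf{h}}}\leq N^{4}/4 \leq \alpha_{C}^{4}/4$, the last inequality using the stated upper bound on the lapse.

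Properness of $f$ is a topological property of the map and is therefore unaffected by the choice of Riemannian metric on $\Sigma$, so $f$ remains proper on $(\Sigma,\tilde{\mathbf{h}})$ and the completeness criterion applies, concluding the proof. The main obstacle I anticipate is \emph{identifying} the correct tool: Theorem~\ref{thm:ghst} is not available here since the spacetime is by assumption non-globally hyperbolic, so the strategy of Theorem~\ref{thm:51} cannot be repeated. The precise hypothesis that $\vec{N}$ is the differential of a \emph{proper} function is tailor-made to produce the exhausting function required by the Gordon-type criterion, and the rank-one structure of the modification $\tilde{h}_{ij}-N^{-2}g_{ij}$ is exactly what makes $|df|_{\tilde{\mathbf{h}}}^{2}$ collapse to the manageable expression $\vec{N}^{2}(N^{2}-\vec{N}^{2})$; once these points are recognized, the remainder is a short and direct computation.
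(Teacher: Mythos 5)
Your proof is correct, and it reaches the conclusion by a more direct route than the paper does, although both arguments rest on the same underlying tool, Gordon's completeness criterion \cite{WG}. You apply the criterion in its analytic form (a smooth proper function with uniformly bounded gradient norm forces geodesic completeness) directly to the target metric $\tilde{\mathbf{h}}$: the inverse $\tilde{h}^{ij}=N^{2}g^{ij}-N^{i}N^{j}$ is correct, the resulting identity $|df|^{2}_{\tilde{\mathbf{h}}}=\vec{N}^{2}\,(N^{2}-\vec{N}^{2})\leq N^{4}/4$ is correct, and properness is indeed independent of the chosen metric, so the criterion applies. The paper instead runs a chain of auxiliary metrics: Gordon's Theorem~1 gives completeness of $k_{ij}=N^{2}g_{ij}+N_{i}N_{j}=N^{2}g_{ij}+\partial_{i}f\,\partial_{j}f$; the bounded conformal factor $N^{-2}$ transfers this to $\tilde{k}_{ij}=g_{ij}+N^{-2}N_{i}N_{j}$; the comparison principle (a metric dominating a complete one is complete, \cite[Remark~2]{WG}) passes to $h_{ij}=g_{ij}+(N^{2}-\vec{N}^{2})^{-1}N_{i}N_{j}$; and a final bounded conformal rescaling yields $\tilde{\mathbf{h}}=N^{-2}\mathbf{h}$. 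Your single gradient computation replaces that whole chain, uses only the upper bound on the lapse (the lower bound $\alpha_{B}$, and indeed the non-global hyperbolicity, play no role in your argument), and sidesteps the pointwise comparison $\mathbf{h}-\tilde{\mathbf{k}}\geq 0$, which the paper asserts as strict but which is only positive semi-definite (it vanishes on directions orthogonal to $\vec{N}$), i.e.\ exactly the semi-definite version that \cite[Remark~2]{WG} requires. What the paper's route buys is modularity: it reuses the same conformal-equivalence and domination lemmas employed elsewhere (e.g.\ in the proof of Theorem~\ref{thm:51}) and never needs the explicit inverse of $\tilde{\mathbf{h}}$; your route buys brevity and slightly weaker hypotheses.
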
 \begin{proof}  
		From the non-global hyperbolicity of the manifold $(M,g)$ it follows with Theorem \ref{thm:ghst} that the Riemannian manifold $(\Sigma,\textbf{g})$ is not complete.  However, since the shift vector is a derivative of a smooth proper function (on $\Sigma$), it follows by \cite[Theorem 1]{WG} that the Riemannian manifold $(\Sigma,\textbf{k})$ with metric $\textbf{k}$  given by  
		\begin{align*}
		k_{ij}:= N^2\,g_{ij}+ N_i N_j,
		\end{align*}
		is complete. Next, let us define the Riemannian metric $\tilde{\mathbf{k}}$ as
		\begin{align*}
		\tilde{k}_{ij}:= g_{ij}+ N^{-2} N_iN_j.
		\end{align*}
		From the boundedness of $N^2$ it follows that    the norms of $\tilde{\mathbf{k}}$ and ${\mathbf{k}}$ are strongly equivalent, i.e.,
		\begin{align*}
		\alpha_{B} \tilde{k}_{ij}u^iu^j\leq {k}_{ij}u^iu^j\leq \alpha_{C} \tilde{k}_{ij}.
		\end{align*}
		Hence, from the completeness of the Riemannian manifold $(\Sigma,\textbf{k})$ it follows that $(\Sigma,\tilde{\mathbf{k}})$ is geodesically complete as well.  Next,  we turn to  the Riemannian metric $\textbf{h}$ given as
		\begin{align*}
		h_{ij}= g_{ij}+N^{-2}(1-\Vert\vec{N}\Vert^2_{\tilde{\mathbf{g}}})^{-1}N_iN_j,
		\end{align*}
		with $\tilde{\mathbf{g}}:= N^{-2}\mathbf{g}$. Since $\tilde{\mathbf{k}}$ is complete, $\textbf{h}>0$ and $\textbf{h}-\tilde{\mathbf{k}}>0$  for any tangent vectors $u\in T\Sigma$,
		\begin{align*}
		(h_{ij}-k_{ij})u^iu^j=( (1-  \Vert\vec{N}\Vert^2_{\tilde{\mathbf{g}}})^{-1}N^{-2}N_iN_j-N^{-2}N_iN_j )u^iu^j
		>0.
		\end{align*}
		This is satisfied by multiplying with the term $(1-  \Vert\vec{N}\Vert^2_{\tilde{\mathbf{g}}})>0$, 
		\begin{equation*}
		(  1-(1-  \Vert\vec{N}\Vert^2_{\tilde{\mathbf{g}}}  )N^{-2} N_iN_ju^iu^j =
		(  \Vert\vec{N}\Vert^2_{\tilde{\mathbf{g}}}  )N^{-2}N_iN_j\,u^iu^j
		>0.
		\end{equation*} 
		It follows  that  the Riemannian manifold $(\Sigma,\textbf{h})$ is complete, (see \cite[Remark 2]{WG}).  
		The focus of our interest is   the conformally transformed Riemannian manifold,  
		$(\Sigma,\tilde{\textbf{h}})$, with the metric  $\tilde{\textbf{h}}=N^{-2}\textbf{h}$. Since $N^{2}$ is bounded from below and above it follows by the strong equivalence of the norms of the metrics $\tilde{\textbf{h}}$ and ${\textbf{h}}$, i.e.,
		\begin{align*}
		\alpha_{B} \tilde{h}_{ij}u^iu^j\leq {h}_{ij}u^iu^j\leq \alpha_{C} \tilde{h}_{ij},
		\end{align*}
		for any tangent vectors $u\in T\Sigma$ that $(\Sigma,\tilde{\textbf{h}})$ is geodesically complete.	
	\end{proof}
	Therefore, the operator  $w^2$ (from Equation~\ref{op}) is an \textbf{essentially} self-adjoint operator on $C_0^{\infty}(\Sigma )\subset L^{2}(\Sigma,\tilde{\textbf{h}},  {\mu})$. Next, let us consider a static spacetime   $(M,g)$, where $M=\R\times\Sigma$, with a metric  $g$ given by
	\begin{align*}
	g=-N^2(\vec{x})\,dt^2+g_{ij}(\vec{x})\,dx^i\,dx^j,
	\end{align*}
	where $\mathbf{g}$ is the induced Riemannian metric on the smooth manifold $\Sigma$.
	For a non-globally hyperbolic spacetime $M$ the Riemannian manifold   $(\Sigma,N^{-2}\mathbf{g})$ is not complete. Since if it were complete the spacetime $(M,\tilde{g})$, where $\tilde{g}=N^{-2}g$, would be globally hyperbolic and so would the spacetime $(M, {g})$.\footnote{ 
		Moreover if $N^2(\vec{x})$ is bounded  it follows from Theorem \ref{thm:ghst}  (or \cite[Proposition 5.2]{K78})   that the Riemannian manifold   $(\Sigma, \mathbf{g})$ is not complete.} 
	Therefore,   \cite[Theorem 4.1]{MO} regarding essential self-adjointness of the operator $w^2$ does not apply.  However, one can complete the Riemannian manifold $(\Sigma,N^{-2}\mathbf{g})$ by using the work of \cite{WG} in a natural way.  
	\begin{theorem} \label{thm55}
		Let  $(M,g)$ be a non-globally hyperbolic static spacetime, where $M=\R\times\Sigma$, with   metric  $g$ given by
		\begin{align*}
		g=-N^2(\vec{x})\,dt^2+g_{ij}(\vec{x})\,dx^i\,dx^j,
		\end{align*}
	and let $(\Sigma, \textbf{g})$ be a smooth Riemannian manifold and $-N^2+N_iN^i<0$. Then, there exists a  vector $\vec{\gamma}$   such that  $\vec{\nabla}{\gamma}=\vec{\gamma}$, where $\gamma$ is a smooth  proper function on $\Sigma$ and  it follows that  
		the Riemannian manifold $(\Sigma,N^{-2}e^{\Vert\vec{\gamma}\Vert^2_{\textbf{g}}}\mathbf{g})$ is complete. Hence, the  warped product manifold $(M_{f},g_f)$, where $M_{f}=\Sigma\times_f\R$ with $f=e^{-\Vert\vec{\gamma}\Vert^2_{\textbf{g}}}, \,\,f :\Sigma\rightarrow(0,\infty)$ and metric
		\begin{align*}
		g_f=-e^{-\Vert\vec{\gamma}\Vert^2_{\textbf{g}}}N^2(\vec{x}) dt^2+g_{ij}(\vec{x})\, dx^i \, dx^j,
		\end{align*} 
		is  globally hyperbolic. 
	\end{theorem}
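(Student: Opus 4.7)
The plan breaks naturally into three stages: first, produce a smooth proper function $\gamma$ on $\Sigma$ and identify $\vec{\gamma}$ as its $\mathbf{g}$-gradient; second, verify geodesic completeness of $(\Sigma,\tilde{\mathbf{h}})$ by appealing to a Gordon-type analytical completeness criterion \cite{WG}; third, promote this to global hyperbolicity of $(M_f,g_f)$ through a conformal identification with an ultrastatic spacetime. The first step is off-the-shelf: every smooth Hausdorff second-countable manifold admits a smooth proper exhaustion function $\gamma:\Sigma\to[0,\infty)$, obtained from a countable locally finite cover by relatively compact open sets glued via a partition of unity, and its gradient $\vec{\gamma}:=\vec{\nabla}\gamma$ is then a smooth vector field satisfying $\vec{\nabla}\gamma=\vec{\gamma}$ tautologically.

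For the completeness step I would invoke the Gordon criterion in the form used throughout the paper: a Riemannian manifold is complete whenever it admits a smooth proper function whose gradient, measured in the ambient metric, is uniformly bounded. Taking $\gamma$ as the candidate and using $\tilde{h}^{ij}=N^{2}\,e^{-\|\vec{\gamma}\|^{2}_{\mathbf{g}}}\,g^{ij}$, a direct computation gives
\begin{equation*}
\|\vec{\nabla}\gamma\|^{2}_{\tilde{\mathbf{h}}}=\tilde{h}^{ij}\partial_i\gamma\,\partial_j\gamma=N^{2}\,e^{-\|\vec{\gamma}\|^{2}_{\mathbf{g}}}\,\|\vec{\gamma}\|^{2}_{\mathbf{g}}.
\end{equation*}
Since the real function $u\mapsto u\,e^{-u}$ is bounded on $[0,\infty)$ by $e^{-1}$, this yields $\|\vec{\nabla}\gamma\|^{2}_{\tilde{\mathbf{h}}}\leq N^{2}/e$. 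Combined with the properness of $\gamma$, this delivers completeness of $(\Sigma,\tilde{\mathbf{h}})$ as soon as $N$ is uniformly bounded from above on $\Sigma$.

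For the global-hyperbolicity step I would factor the warped product metric as
\begin{equation*}
g_{f}=e^{-\|\vec{\gamma}\|^{2}_{\mathbf{g}}}\,N^{2}\bigl(-dt^{2}+\tilde{h}_{ij}\,dx^{i}\,dx^{j}\bigr),
\end{equation*}
exhibiting $g_{f}$ as conformal, by a smooth positive factor, to the ultrastatic metric whose spatial part is exactly $\tilde{\mathbf{h}}$. That ultrastatic spacetime has constant lapse one and vanishing shift, so Assumption~\ref{ass:bound} is trivially satisfied, and Theorem~\ref{thm:ghst} then converts completeness of $\tilde{\mathbf{h}}$ into global hyperbolicity of the ultrastatic metric. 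Since causal structure and hence global hyperbolicity are preserved under smooth positive conformal rescalings \cite[Appendix~D]{WA}, $(M_{f},g_{f})$ inherits global hyperbolicity.

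The main obstacle I anticipate is that the bound $\|\vec{\nabla}\gamma\|^{2}_{\tilde{\mathbf{h}}}\leq N^{2}/e$ only secures completeness when $N$ is controlled from above on $\Sigma$, a hypothesis that is implicit rather than stated. Along ends where $N$ grows without bound, the exponential damping of $\|\vec{\gamma}\|^{2}_{\mathbf{g}}\,e^{-\|\vec{\gamma}\|^{2}_{\mathbf{g}}}$ need not compensate for the prefactor $N^{2}$, so Gordon's criterion may fail to apply directly. The natural workaround is to replace $\gamma$ by $F\circ\gamma$ for a monotone proper $F:[0,\infty)\to[0,\infty)$ whose derivative decays at a rate tied to the growth of $N$ along the noncompact ends of $\Sigma$, converting the required uniform bound into a solvable inequality between the growth of $F$ and of $N$ that one should verify in each concrete geometry.
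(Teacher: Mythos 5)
Your first and third steps essentially parallel the paper's proof: the proper function comes from a standard exhaustion argument (the paper cites \cite[Proposition 1.3.5]{Pet}), and your conformal factorization $g_f=e^{-\Vert\vec{\gamma}\Vert^2_{\mathbf{g}}}N^2\bigl(-dt^2+\tilde{h}_{ij}dx^i dx^j\bigr)$ combined with Theorem~\ref{thm:ghst} plays exactly the role that \cite[Proposition 5.2]{K78} plays in the paper, followed in both cases by conformal invariance of global hyperbolicity. Those parts are fine.

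The gap is the completeness step, and it is genuine, as you yourself flag. Gordon's criterion \cite{WG} needs a proper function whose gradient is \emph{uniformly} bounded in the metric under consideration; your computation gives $\Vert\nabla\gamma\Vert^2_{\tilde{\mathbf{h}}}=N^2 u e^{-u}\leq N^2/e$ with $u=\Vert\vec{\gamma}\Vert^2_{\mathbf{g}}$, so your argument proves the theorem only under an extra hypothesis that $N$ is bounded above on $\Sigma$ --- a hypothesis that appears in Theorem~\ref{thm54} but deliberately not here. Your proposed repair, replacing $\gamma$ by $F\circ\gamma$ and ``verifying in each concrete geometry,'' is not carried out and is not obviously implementable: with the exponent still measured in $\mathbf{g}$, boundedness of $N^2 u e^{-u}$ forces $u$ to avoid the intermediate range (roughly $N^{-2}\lesssim u\lesssim \ln N^2$) wherever $N$ is large, and arranging that a single smooth $u$ crosses this range only where $N$ is controlled is a nontrivial global construction you have not supplied; in any case, a claim to be checked case by case is not a proof of the stated theorem. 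The missing idea --- and the route the paper takes via \cite[Theorem 1]{NOZ} and the Corollary of \cite{WG} --- is to use the existential freedom in $\gamma$ and build the conformal factor relative to the optical metric $N^{-2}\mathbf{g}$ itself: for any proper $\gamma$, the metric $e^{w}N^{-2}\mathbf{g}$ with $w:=\Vert\nabla\gamma\Vert^2_{N^{-2}\mathbf{g}}=N^2\Vert\vec{\gamma}\Vert^2_{\mathbf{g}}$ satisfies $\Vert\nabla\gamma\Vert^2_{e^{w}N^{-2}\mathbf{g}}=we^{-w}\leq e^{-1}$, so completeness follows with no bound on $N$ whatsoever; the lapse has been absorbed before the exponential damping is applied, which is precisely what your ordering of the two conformal factors fails to do. (If one insists on reading the exponent in the statement literally as $\Vert\cdot\Vert_{\mathbf{g}}$, your objection about unbounded $N$ is a fair criticism of the formulation, but then your proof still only covers the bounded case rather than the theorem as intended and proved in the paper.)
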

	\begin{proof}
		The existence of   a smooth proper function $\gamma$ on  a smooth manifold follows from \cite[Proposition 1.3.5]{Pet}. From \cite[Theorem 1]{NOZ}, \cite[Corollary]{WG} it follows that $(\Sigma,N^{-2} e^{\Vert\vec{\gamma}\Vert^2_{\textbf{g}}}\mathbf{g})$ is   complete and hence   \cite[Proposition 5.2]{K78}  insures the global hyperbolicity of 
		$(M_{f},N^{-2} e^{\Vert\vec{\gamma}\Vert^2_{\textbf{g}}}g_f)$. Since conformal transformations preserve the causal structure it follows that  	$(M_{f}, g_f)$ is globally hyperbolic as well. 
	\end{proof}  
	There is a straightforward generalization of this result to the case of stationary spacetimes. 
	\begin{theorem} Let  $(M,g)$ be a non-globally hyperbolic stationary spacetime, where $M=\R\times\Sigma$, with   metric  $g$ given by
		\begin{align*}
		g=-N^2(\vec{x})\,dt^2+g_{ij}(\vec{x})\,(dx^i+N^i(\vec{x})\,dt)\,(dx^j+N^j(\vec{x})\,dt),
		\end{align*}
and let $(\Sigma, \textbf{g})$ be a smooth Riemannian manifold. Then, there exists a  vector $\vec{\gamma}$   such that  $\vec{\nabla}{\gamma}=\vec{\gamma}$, where $\gamma$ is a smooth  proper function on $\Sigma$ and  it follows that  
		the Riemannian manifold $(\Sigma,N^{-2}e^{\Vert\vec{\gamma}\Vert^2_{\textbf{g}}}\mathbf{g})$ is complete. Hence, the  warped product manifold $(M_{f},g_f)$, where $M_{f}=\Sigma\times_f\R$ with $f=e^{-\Vert\vec{\gamma}\Vert^2_{\textbf{g}}}, \,\,f :\Sigma\rightarrow(0,\infty)$ and metric
		\begin{align*}
		g_f=-e^{-\Vert\vec{\gamma}\Vert^2_{\textbf{g}}}N^2(\vec{x}) dt^2+g_{ij}(\vec{x})\,(dx^i+N^i(\vec{x})\,dt)\,(dx^j+N^j(\vec{x})\,dt),
		\end{align*} 
		is  globally hyperbolic. 
	\end{theorem}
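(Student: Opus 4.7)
The plan is to mirror the proof of Theorem~\ref{thm55}, substituting the appeal to \cite[Proposition~5.2]{K78} (which concerns ultra-static metrics) by Theorem~\ref{thm:ghst} (the stationary analogue). First I would invoke \cite[Proposition~1.3.5]{Pet} on the smooth manifold $\Sigma$ to obtain a smooth proper function $\gamma:\Sigma\to\R$, and set $\vec{\gamma}:=\vec{\nabla}\gamma$. Completeness of $(\Sigma,N^{-2}e^{\|\vec{\gamma}\|_{\mathbf{g}}^{2}}\mathbf{g})$ would then follow verbatim from the proof of Theorem~\ref{thm55}, since \cite[Theorem~1]{NOZ} and \cite[Corollary]{WG} are purely Riemannian statements insensitive to the presence of a shift vector in the ambient Lorentzian metric.

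Next I would conformally rescale the warped-product metric $g_f$ by $\lambda:=N^{-2}e^{\|\vec{\gamma}\|_{\mathbf{g}}^{2}}$. Since the lapse of $g_f$ equals $e^{-\|\vec{\gamma}\|_{\mathbf{g}}^{2}/2}N$, the rescaled metric reads
\begin{equation*}
\lambda\,g_{f}=-dt^{2}+\lambda\,g_{ij}(\vec{x})\,(dx^{i}+N^{i}\,dt)(dx^{j}+N^{j}\,dt),
\end{equation*}
which is stationary of the form~\eqref{eq:metric} with unit lapse, shift $N^{i}$, and a time-independent spatial metric $\lambda\,\mathbf{g}$ that is complete by the previous step. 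Verifying the hypotheses of Theorem~\ref{thm:ghst} for $(M_{f},\lambda g_{f})$ then yields its global hyperbolicity, and invoking the invariance of the causal structure under conformal rescaling \cite[Appendix~D]{WA} transfers global hyperbolicity back to $(M_{f},g_{f})$.

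The main obstacle is expected to be the uniform bound required by part~(2) of Assumption~\ref{ass:bound}, namely that the shift norm $\lambda\,g_{ij}N^{i}N^{j}=N^{-2}e^{\|\vec{\gamma}\|_{\mathbf{g}}^{2}}g_{ij}N^{i}N^{j}$ be globally bounded. Assumption~\ref{assump:met} controls $N^{-2}g_{ij}N^{i}N^{j}$ by one, but because $\gamma$ is proper the exponential weight $e^{\|\vec{\gamma}\|_{\mathbf{g}}^{2}}$ may blow up at infinity, so this crude estimate alone is insufficient. To handle this I would, in analogy with Theorem~\ref{thm54}, first absorb the shift into an intermediate spatial metric of the form $k_{ij}=g_{ij}+cN_{i}N_{j}$ whose completeness is inherited from $(\Sigma,N^{-2}e^{\|\vec{\gamma}\|_{\mathbf{g}}^{2}}\mathbf{g})$ via \cite[Theorem~1, Remark~2]{WG}, and then refine the choice of the proper function $\gamma$ so that its growth dominates any residual growth of the shift norm; this is in effect the same trade-off that was already implicit in Theorem~\ref{thm54}, now packaged into the single exponential weight. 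The remaining verifications---unit lapse, time-independence of $\lambda\,\mathbf{g}$, and conformal invariance of the causal structure---are routine.
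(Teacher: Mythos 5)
Your overall skeleton coincides with the paper's proof: obtain a smooth proper function $\gamma$ from \cite[Proposition 1.3.5]{Pet}, deduce completeness of $(\Sigma, N^{-2}e^{\Vert\vec\gamma\Vert^2_{\mathbf{g}}}\mathbf{g})$ from \cite[Theorem 1]{NOZ} and \cite[Corollary]{WG}, pass to the conformally rescaled spacetime with unit lapse, unchanged shift $N^i$ and spatial metric $\lambda\,\mathbf{g}$ with $\lambda=N^{-2}e^{\Vert\vec\gamma\Vert^2_{\mathbf{g}}}$, apply Theorem \ref{thm:ghst}, and transfer global hyperbolicity back by conformal invariance of the causal structure. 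You are also right that the only delicate point is Assumption \ref{ass:bound}(2) for the rescaled metric: the paper's own proof simply asserts that this scalar product is uniformly bounded, and the corresponding hypothesis only surfaces explicitly in the corollary that follows, where $N^{-2}e^{\Vert\vec\gamma\Vert^2_{\mathbf{g}}}\vec{N}^2<1$ is assumed.

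However, your proposed repair of that step does not work. The rescaled shift scalar product is $\lambda\, g_{ij}N^iN^j = e^{\Vert\vec\gamma\Vert^2_{\mathbf{g}}}\,\big(N^{-2}g_{ij}N^iN^j\big)$, i.e.\ it is \emph{proportional} to the conformal factor; choosing $\gamma$ with faster growth enlarges $e^{\Vert\vec\gamma\Vert^2_{\mathbf{g}}}$ and therefore enlarges precisely the quantity you must bound, so the growth of $\gamma$ cannot ``dominate'' the shift norm --- the inequality runs the wrong way. What is needed is a decay condition on $N^{-2}g_{ij}N^iN^j$ relative to $e^{-\Vert\vec\gamma\Vert^2_{\mathbf{g}}}$ (Assumption \ref{assump:met} only gives $N^{-2}g_{ij}N^iN^j<1$), which is an assumption on the data, as in the subsequent corollary, not something achievable by refining $\gamma$. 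Moreover, the intermediate metric $k_{ij}=g_{ij}+c\,N_iN_j$ does not inherit completeness from $(\Sigma,\lambda\,\mathbf{g})$ via \cite[Theorem 1, Remark 2]{WG}: that argument requires $\mathbf{k}-\lambda\,\mathbf{g}\geq 0$ pointwise, which fails in directions orthogonal to $\vec N$ wherever $\lambda>1$, and in any case Theorem \ref{thm:ghst} must be applied to the actual spatial metric $\lambda\,\mathbf{g}$ of the rescaled spacetime, not to $\mathbf{k}$. So either follow the paper and take the uniform bound on $\lambda\, g_{ij}N^iN^j$ as a (tacit or explicit) hypothesis, or add it to the statement; the trade-off you describe cannot produce it.
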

	\begin{proof}
		The existence of   a smooth proper function $\gamma$ on  a smooth manifold follows from \cite[Proposition 1.3.5]{Pet}. From \cite[Theorem 1]{NOZ}, \cite[Corollary]{WG} it follows that $(\Sigma, {\tilde{\mathbf{g}}})$, where $\tilde{\mathbf{g}}:=
		N^{-2} e^{\Vert\vec{\gamma}\Vert^2_{\textbf{g}}}\mathbf{g})$, is   complete. Moreover, since the scalar product of the shift vectors $\beta^i(\vec{x},t)$ for the spatial metric $N^{-2} e^{\Vert\vec{\gamma}\Vert^2_{\textbf{g}}}\mathbf{g}$ is uniformly bounded from above, it follows from Theorem \ref{thm:ghst} that 	$(M_{f},N^{-2} e^{\Vert\vec{\gamma}\Vert^2_{\textbf{g}}}g_f)$
		is globally hyperbolic. As in the previous proof, we use the fact that conformal transformations preserve the causal structure and hence it follows that  	$(M_{f}, g_f)$ is globally hyperbolic as well. 
	\end{proof}  
	From this Theorem we obtain the following result.
	\begin{corollary} Let  the
		metric $\tilde{\mathbf{h}}$ be given by  
		$$\tilde{h}_{ij}=\tilde{g}_{ij}+ (1-  N^{-2}e^{\Vert\vec{\gamma}\Vert^2_{\textbf{g}}}\vec{N}^2)^{-1}N^{-4}e^{\Vert\vec{\gamma}\Vert^4_{\textbf{g}}}N_iN_j,$$
		where  $\tilde{\mathbf{g}} =
		N^{-2} e^{\Vert\vec{\gamma}\Vert^2_{\textbf{g}}}\mathbf{g}$ is the Riemannian metric from the former theorem and let $N^{-2}e^{\Vert\vec{\gamma}\Vert^2_{\textbf{g}}}\vec{N}^2<1$. Then, the Riemannian manifold $(\Sigma, \tilde{\mathbf{h}})$ is  complete. 
	\end{corollary}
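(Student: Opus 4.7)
The plan is to reduce this corollary to the preceding theorem by the same rank-one perturbation argument that underlies the proof of Theorem~\ref{thm54}. The starting point is to invoke the preceding theorem, whose hypotheses are in force, to obtain that the conformally rescaled Riemannian manifold $(\Sigma,\tilde{\mathbf{g}})$ with $\tilde{\mathbf{g}}=N^{-2}e^{\Vert\vec{\gamma}\Vert^2_{\textbf{g}}}\mathbf{g}$ is geodesically complete. This packages all the nontrivial analytic content---existence of the smooth proper function $\gamma$, the Nomizu-Ozeki completion procedure, and the Gordon-type criterion on conformal factors---into a single input, so that for the corollary itself only a comparison step remains.

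The second step is to verify pointwise that $\tilde{\mathbf{h}}\geq\tilde{\mathbf{g}}$ as quadratic forms on $T\Sigma$. Writing the difference explicitly gives, for any $u\in T\Sigma$,
\begin{equation*}
(\tilde{h}_{ij}-\tilde{g}_{ij})\,u^i u^j \;=\; \frac{N^{-4}\,e^{\Vert\vec{\gamma}\Vert^4_{\textbf{g}}}}{1-N^{-2}e^{\Vert\vec{\gamma}\Vert^2_{\textbf{g}}}\vec{N}^2}\;(N_i u^i)^2 .
\end{equation*}
The assumption $N^{-2}e^{\Vert\vec{\gamma}\Vert^2_{\textbf{g}}}\vec{N}^2<1$ makes the scalar prefactor strictly positive, while the rank-one factor $(N_i u^i)^2$ is manifestly nonnegative. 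Hence $\tilde{\mathbf{h}}-\tilde{\mathbf{g}}$ is positive semi-definite at every point, and $\tilde{\mathbf{h}}$ is a genuine Riemannian metric dominating the complete one $\tilde{\mathbf{g}}$.

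The final step is to invoke the Gordon-Wissner monotonicity principle~\cite[Theorem~1, Remark~2]{WG}, exactly as used at the end of the proof of Theorem~\ref{thm54}: a Riemannian metric that pointwise dominates a complete one is itself geodesically complete. Applying this to the pair $(\tilde{\mathbf{g}},\tilde{\mathbf{h}})$ yields the asserted completeness of $(\Sigma,\tilde{\mathbf{h}})$. I do not anticipate any genuine obstacle: the only point that requires attention is the strict positivity of the coefficient in front of $N_iN_j$, and this is handed to us directly by the hypothesis. Everything else is a verbatim repetition of the rank-one comparison already carried out in the non-globally hyperbolic stationary case.
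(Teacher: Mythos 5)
Your proposal is correct and follows essentially the same route as the paper: completeness of $(\Sigma,\tilde{\mathbf{g}})$ from the preceding theorem, positive semi-definiteness of the rank-one difference $\tilde{\mathbf{h}}-\tilde{\mathbf{g}}$ guaranteed by the hypothesis $N^{-2}e^{\Vert\vec{\gamma}\Vert^2_{\textbf{g}}}\vec{N}^2<1$, and the comparison criterion of \cite[Remark~2]{WG}. Your explicit display of the difference of quadratic forms is if anything slightly more careful than the paper's one-line argument, so no gaps remain.
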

	\begin{proof}We consider the case $\vec{N}\neq0$, due to the fact that   the case  $\vec{N}=0$ follows   from the previous theorem.  Since $\tilde{\mathbf{g}}$ is complete, $\tilde{\mathbf{h}}>0$ and     $\tilde{\mathbf{h}}-\tilde{\mathbf{g}}>0$  for any tangent vectors $u\in T\Sigma$,
		it follows  that  the Riemannian manifold $(\Sigma,\textbf{h})$ is complete, (see \cite[Remark 2]{WG}).
	\end{proof}
	
	\section*{Acknowledgments}
	A.~Much and R.~Oeckl acknowledge partial support from CONACYT project 259258. A.~Much acknowledges  support from Fordecyt project 265667.
	
	\appendix
	\section{Proofs}
	\subsection{Proof of Lemma \ref{p1}}\label{pp1}
	
	\begin{proof}	
		In order to see the explicit form of the Klein Gordon Equation we first use the splitting of a spacetime provided by  the metric and the inverse metric  and hence  a straight forward calculation yields, 
		\begin{align*} &
		\left(({\sqrt{|g|}})^{-1}\partial_{\mu}(\sqrt{|g|}g^{{\mu}{\nu}}\partial_{\nu})\right)\phi \\=&  
		\left(	g^{00} \partial_0^2 +(\sqrt{|g|})^{-1}\partial_i(   \sqrt{|g|}   g^{i0}    \partial_0 )
		+(\sqrt{|g|})^{-1}\partial_0(   \sqrt{|g|}   g^{0i}    \partial_i )
		+ (\sqrt{|g|})^{-1}\partial_i(   \sqrt{|g|}    h^{ij}     \partial_j )\right)\phi\\=&  
		\left(		g^{00} \partial_0^2 +(\sqrt{|g|})^{-1}\partial_i(   \sqrt{|g|}   N^{-2}N^i    \partial_0 ) 
		+(\sqrt{|g|})^{-1}\partial_0(   \sqrt{|g|}   N^{-2}N^i    \partial_i ) 
		+ (\sqrt{|g|})^{-1}\partial_i(   \sqrt{|g|}    h^{ij}     \partial_j )\right)\phi\\= &  
		\left(	g^{00} \partial_0^2 +(\sqrt{|g|})^{-1}\partial_i(   \sqrt{|g|}   N^{-2}N^i    \partial_0 ) 
		+ N^{-2}N^i    \partial_i \partial_0
		+ (\sqrt{|g|})^{-1}\partial_i(   \sqrt{|g|}    h^{ij}     \partial_j )\right)\phi,
		\end{align*}
		which in turn  leads to  
		\begin{align*} 
		&   
		\left(	 \partial_0^2  + f    \partial_{0}  +w^2\right)\phi=
		0 .
		\end{align*} 
		Here $f$ is defined as  
		\begin{align*}
		f&:= -  (g^{00}\,\sqrt{|g|})^{-1} \partial_i(    \sqrt{|g|}\,  g^{00} \, N^{i} ) -  2 N^i\partial_{i},
		\end{align*}   
		where the operator $w$ is given by 
		\begin{align*}
		w^2&=  (g^{00})^{-1}(\sqrt{|g|})^{-1}\partial_i\left(   \sqrt{|g|}    h^{ij}     \partial_j \right)-(g^{00})^{-1}m^2(x)\\&=
		+ (g^{00})^{-1}(\sqrt{|g|})^{-1}(\sqrt{ |\textbf{h}|})(\sqrt{ |\textbf{h}|})^{-1} \partial_i\left(   \sqrt{|g|}  (\sqrt{ |\textbf{h}|})^{-1}(\sqrt{ |\textbf{h}|})   h^{ij}     \partial_j \right)-(g^{00})^{-1}m^2(x).
		\end{align*}

	\end{proof}

	\subsection{Proof of Lemma \ref{lem:rho}}\label{plem:rho}
	\begin{proof}  
		First, we calculate the explicit form of the function $\rho$ by calculating the ration of  the determinants $|g|$ and $|\textbf{h}|$. We begin with $|g|$,
		\begin{align*}
		|g|&= |
		N_3^2 g_{12} g_{21} - N_2 N_3 g_{13} g_{21} - N_3^2 g_{11} g_{22} + N_1 N_3 g_{13} g_{22} + N_2 N_3 g_{11} g_{23} - N_1 N_3 g_{12} g_{23} \\&- N_2 N_3 g_{12} g_{31} + N_2^2 g_{13} g_{31} + N_1 N_3 g_{22} g_{31} - N_1N_2 g_{23} g_{31} + N_2 N_3 g_{11} g_{32} - N_1N_2 g_{13} g_{32}  \\&- N_1 N_3 g_{21} g_{32}+ N_1^2 g_{23} g_{32} - N_2^2 g_{11} g_{33} + N_1N_2 g_{12} g_{33} + N_1N_2 g_{21} g_{33} - N_1^2 g_{22} g_{33}  +  g_{00} \det(\textbf{g})| .
		\end{align*}
		Here, the determinant of the spatial part of the metric $g$ is 
		\begin{align*}
		\det(\textbf{g}) &= -g_{13} g_{22} g_{31} + g_{12} g_{23} g_{31} + g_{13} g_{21} g_{32} - g_{11} g_{23} g_{32} - g_{12} g_{21} g_{33} + g_{11} g_{22} g_{33} ,
		\end{align*}
		while the determinant of the metric $h_{ij}$ is given as
		\begin{align*}
		|\textbf{h}|&= |g_{00}^{-1}| | N_1^2 g_{23} g_{32} - N_1^2 g_{22} g_{33} + N_3 N_1 g_{13} g_{22} - N_3 N_1 g_{12} g_{23} + N_3 N_1 g_{22} g_{31} - N_2 N_1 g_{23} g_{31}  \\&-N_2 N_1 g_{13} g_{32}- N_3 N_1 g_{21} g_{32}  + N_2 N_1 g_{12} g_{33} + N_2 N_1 g_{21} g_{33} + N_3^2 g_{12} g_{21} - N_2 N_3 g_{13} g_{21}\\& - N_3^2 g_{11} g_{22} + N_2 N_3 g_{11} g_{23} - N_2 N_3 g_{12} g_{31} + N_2^2 g_{13} g_{31} + N_2 N_3 g_{11} g_{32} - N_2^2 g_{11} g_{33} +  g_{00} \det(\textbf{g})|.
		\end{align*}
		Hence, it follows readily that, 
		\begin{align*}
		|\textbf{h}|=|g_{00}^{-1}||{g}| .
		\end{align*} While positivity is obvious, smoothness of the function $\rho$ follows from Assumption \ref{assump:met}.
	\end{proof}
	
	\section{Supplementary Material}
	
	\subsection{Shubin's Theorem}\label{thm:sch}
The following result \cite[Theorem 1.1]{S01} is fundamental for many proofs in this work. 
	\begin{theorem}\label{T5}
		Let the Riemannian manifold $(\Sigma,\mathbf{h})$ be complete and let the potential  $V\in L^2_{loc}(\Sigma, {\nu})$ be such that we can write $V = V_+ + V_-$, where $V_+\in L^2_{loc}(\Sigma, {\nu})\geq 0$ and $V_-\in L^2_{loc}(\Sigma, {\nu})\leq 0$
		point-wise and the corresponding operator  
		$-{\Delta}_{{\nu}}+V$
		be 	semi-bounded from below. Then, the operator    $-{\Delta}_{{\nu}}+V$ is an essentially self-adjoint operator on $C_0^{\infty}(\Sigma,{\nu})$.
	\end{theorem}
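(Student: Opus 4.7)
The plan is to exhibit $(\Sigma,\tilde{\mathbf{h}})$ as a metric that dominates $(\Sigma,\tilde{\mathbf{g}})$ pointwise as a bilinear form, and then invoke the Wang--Gordon comparison criterion \cite[Remark~2]{WG}, exactly as done at the end of the proof of Theorem~\ref{thm54}. The preceding theorem already supplies the crucial input, namely that $(\Sigma,\tilde{\mathbf{g}})$ with $\tilde{\mathbf{g}} = N^{-2} e^{\Vert\vec{\gamma}\Vert^2_{\textbf{g}}}\mathbf{g}$ is geodesically complete. So the entire task reduces to verifying a pointwise tensor inequality.

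First I would dispose of the degenerate case $\vec{N}=0$: then the second summand in the definition of $\tilde{\mathbf{h}}$ vanishes identically and $\tilde{\mathbf{h}} = \tilde{\mathbf{g}}$, so completeness is immediate. Assuming now $\vec{N}\neq 0$, the main steps are as follows. Write
\begin{equation*}
\tilde{h}_{ij} = \tilde{g}_{ij} + R_{ij}, \qquad R_{ij} := \bigl(1 - N^{-2}e^{\Vert\vec{\gamma}\Vert^2_{\textbf{g}}}\vec{N}^2\bigr)^{-1} N^{-4} e^{\Vert\vec{\gamma}\Vert^4_{\textbf{g}}}\, N_i N_j .
\end{equation*}
By hypothesis $N^{-2}e^{\Vert\vec{\gamma}\Vert^2_{\textbf{g}}}\vec{N}^2<1$, so the scalar prefactor of $R$ is strictly positive and well-defined on all of $\Sigma$. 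Since $R$ is the (positive multiple of the) symmetric rank-one tensor $N_i N_j$, we have $R(u,u)=(\text{positive factor})\cdot(N_i u^i)^2 \geq 0$ for every $u\in T\Sigma$. Hence $\tilde{\mathbf{h}}-\tilde{\mathbf{g}}\geq 0$ as a bilinear form, and in particular $\tilde{\mathbf{h}}$ is itself positive definite and smooth.

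With the pointwise inequality in hand, every tangent vector satisfies $\Vert u\Vert_{\tilde{\mathbf{h}}}\geq \Vert u\Vert_{\tilde{\mathbf{g}}}$, so any $\tilde{\mathbf{h}}$-Cauchy sequence (or any curve of finite $\tilde{\mathbf{h}}$-length) is also $\tilde{\mathbf{g}}$-Cauchy (resp.\ of finite $\tilde{\mathbf{g}}$-length). Completeness of $\tilde{\mathbf{g}}$ then transfers to $\tilde{\mathbf{h}}$ by \cite[Remark~2]{WG}, which is precisely the argument already used twice in the proof of Theorem~\ref{thm54}. I do not expect any genuine obstacle: the only thing to keep an eye on is that the prefactor in $R$ is manifestly finite and positive, which is exactly what the added hypothesis $N^{-2}e^{\Vert\vec{\gamma}\Vert^2_{\textbf{g}}}\vec{N}^2<1$ guarantees, so the proof consists essentially of reading off a positivity and citing the comparison lemma.
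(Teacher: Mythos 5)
Your proposal does not address the statement at hand. Theorem~\ref{T5} is Shubin's essential self-adjointness theorem: it asserts that the Schr\"odinger-type operator $-\Delta_{\nu}+V$ on a complete weighted Riemannian manifold $(\Sigma,\mathbf{h},\nu)$, with $V=V_++V_-$, $V_\pm\in L^2_{loc}(\Sigma,\nu)$, and the operator semi-bounded from below, is essentially self-adjoint on $C_0^{\infty}(\Sigma)$. What you have written is instead a proof of the geodesic completeness of the metric $\tilde{\mathbf{h}}=\tilde{\mathbf{g}}+R$ via a pointwise domination argument and the comparison criterion of \cite[Remark~2]{WG} --- that is, you have proved (a version of) the final Corollary of the section on non-globally hyperbolic spacetimes, not Theorem~\ref{T5}. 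Nothing in your argument touches the operator $-\Delta_{\nu}+V$, its symmetric domain $C_0^{\infty}(\Sigma)$, its deficiency indices, or the role of the potential decomposition, so it cannot establish the claimed essential self-adjointness.

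For the record, the paper does not prove Theorem~\ref{T5} either: it is quoted verbatim from \cite[Theorem~1.1]{S01} and used as a black box. An actual proof is substantially harder than anything in this paper. The standard route is: use semi-boundedness to reduce essential self-adjointness to showing that $\ker\big((-\Delta_{\nu}+V)^*+c\big)=\{0\}$ for sufficiently large $c>0$; then, for a distributional solution $u\in L^2(\Sigma,\nu)$ of $(-\Delta_{\nu}+V+c)u=0$, combine a Kato-type inequality (to handle the merely $L^2_{loc}$ potential) with a family of Lipschitz cut-off functions having uniformly controlled gradients --- whose existence is exactly where geodesic completeness of $(\Sigma,\mathbf{h})$ enters --- to conclude $u=0$. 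If your intention was to prove the completeness Corollary, your argument there is essentially sound (the rank-one term $R_{ij}\propto N_iN_j$ is positive semi-definite once the prefactor is positive, and \cite[Remark~2]{WG} then transfers completeness), but it is a proof of a different statement.
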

	We write $f\in L^2_{loc}(\Sigma, {\nu})$  for 
	a local $L^2(\Sigma, {\nu})$ function $f$  that is an element of the weighted Hilbert space $L^2(\Sigma, {\nu})$ on every compact subset of the manifold $\Sigma$. 
	

\end{document}